\useunder{\uline}{\ul}{}
\newcommand{\cmark}{\ding{51}}%
\newcommand{\xmark}{\ding{55}}%
  \providecommand\BibTeX{{%
    \normalfont B\kern-0.5em{\scshape i\kern-0.25em b}\kern-0.8em\TeX}}}
\newtheorem{theorem}{Theorem}
\newtheorem{definition}{Definition}
\newtheorem{remark}{Remark}
\newtheorem{example}{Example}
\newtheorem{pDefinition}{Problem Definition}
\newcommand{\NCADR}{\textsf{NCA-DR}}
\newcommand{\FPADMG}{\textsf{FPA-DMG}}
\newcommand{\DMCS}{\textsf{DMCS}}
\newcommand{\NCA}{\textsf{NCA}}
\newcommand{\FPA}{\textsf{FPA}}
\newcommand{\kc}{\textsf{kc}}
\newcommand{\wu}{\textsf{wu2015}}
\newcommand{\icwi}{\textsf{icwi2008}}
\newcommand{\clique}{\textsf{clique}}
\newcommand{\GN}{\textsf{GN}}
\newcommand{\kt}{\textsf{kt}}
\newcommand{\kecc}{\textsf{kecc}}
\newcommand{\huang}{\textsf{huang2015}}
\newcommand{\CNM}{\textsf{CNM}}
\newcommand{\hightruss}{\textsf{hightruss}}
\newcommand{\highcore}{\textsf{highcore}}
\newcommand{\spara}[1]{\smallskip\noindent{\bf #1}}
\DeclareMathOperator*{\argmax}{arg\,max} 
\def\cmark{\tikz\fill[scale=0.4](0,.35) -- (.25,0) -- (1,.7) -- (.25,.15) -- cycle;}
\renewcommand\footnotetextcopyrightpermission[1]{} 
\begin{document}
\fancyhead{}

\title{DMCS : Density Modularity based Community Search}

\author{Junghoon Kim$\S$, Siqiang Luo$\S$, Gao Cong$\S$, Wenyuan Yu$\dag$}
\affiliation{%
  \institution{$\S$School of Computer Science and Engineering, Nanyang Technological University, Singapore}
  \country{$\dag$ Alibaba Group, China}}
\email{{junghoon001@e., siqiang.luo, gaocong@}ntu.edu.sg, wenyuan.ywy@alibaba-inc.com}


\renewcommand{\shortauthors}{Kim, et al.}

\begin{abstract}

Community Search, or finding a connected subgraph (known as a community) containing the given query nodes in a social network, is a fundamental problem. Most of the existing community search models only focus on the internal cohesiveness of a community. However, a high-quality community often has high modularity, which means dense connections inside communities and sparse connections to the nodes outside the community. In this paper, we conduct a pioneer study on searching a community with high modularity. We point out that while modularity has been popularly used in community detection (without query nodes), it has not been adopted for community search, surprisingly, and its application in community search (related to query nodes) brings in new challenges. We address these challenges by designing a new graph modularity function named {\it Density Modularity}. To the best of our knowledge, this is the first work on the community search problem using graph modularity. The community search based on the density modularity, termed as DMCS, is to find a community in a social network that contains all the query nodes and has high density-modularity. We prove that the DMCS problem is NP-hard. To efficiently address DMCS, we present new algorithms that run in log-linear time to the graph size. We conduct extensive experimental studies in real-world and synthetic networks, which offer insights into the efficiency and effectiveness of our algorithms. In particular, our algorithm achieves up to 8.5 times higher accuracy in terms of NMI than baseline algorithms.
\end{abstract}


\maketitle

\section{Introduction}
\vspace{-0.1cm}
Given a graph $G$ and a set of query nodes $Q$, the community search problem aims to find a connected subgraph that contains all the query nodes in $Q$ and satisfies some cohesiveness constraints~\cite{fang2020survey,cui2014local,barbieri2015efficient,huang2014querying}. Recently, this community search problem has attracted extensive research interest in diverse fields with applications such as marketing~\cite{kim2020densely}, recommendation~\cite{fang2017effective}, and social event organization~\cite{sozio2010community}. Existing studies~\cite{cui2014local, barbieri2015efficient, huang2014querying, wu2015robust, sozio2010community, kim2020densely} explore different ways of defining a {\it community}, aiming for efficiently extracting an effective community.

\begin{figure}[t]
\centering
\includegraphics[width=0.99\linewidth]{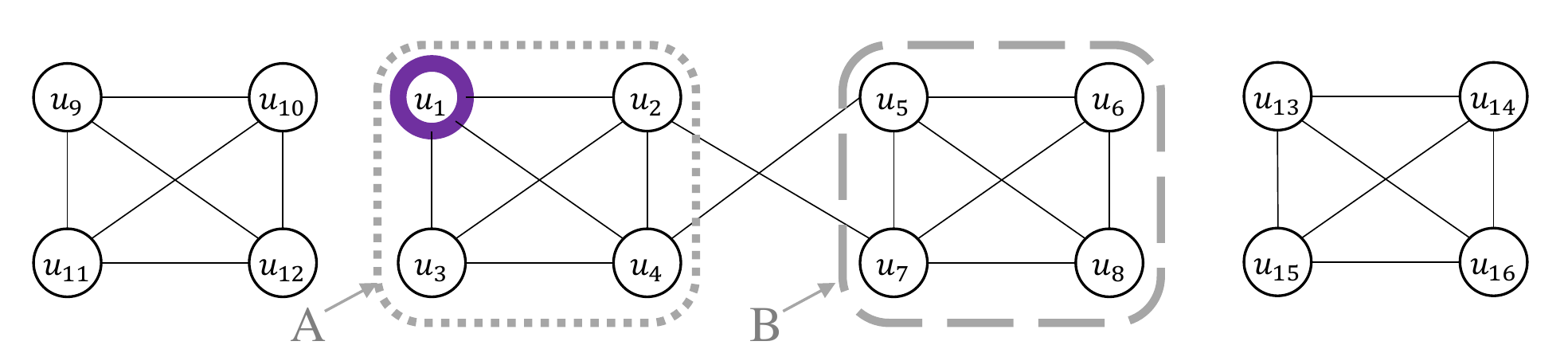}
\vspace{-0.3cm}
\caption{Toy network with communities}
\label{DMCS_fig:example}
\end{figure}

\spara{Motivation.} Many existing community search models are based on the minimum degree~\cite{sozio2010community,cui2014local,barbieri2015efficient, kim2020densely,wang2020efficient, fang2017effective} (also known as $k$-core) and triangle counting~\cite{huang2014querying, jiang2021efficient,zheng2017finding,liu2021efficient,akbas2017truss} (also known as $k$-truss). 
Given a graph and positive integer $k$, a connected subgraph is a $k$-core community if its minimum degree is at least $k$. Similarly, a maximal subgraph $H$ is a $k$-truss community if each edge in $H$ participates in at least $(k-2)$ triangles in $H$. 
Note that both models depend on the input parameter $k$ and the community quality is sensitive to the inherent parameters, although such models are easy to interpret and compute. When $k$ is not properly set, they may fail to return high-quality communities. For example, if every node has at least $3$ neighbor nodes, searching for a $3$-core will return the whole graph. Similarly, if all the edges are involved in at least $2$ triangles, then the whole graph is a $4$-truss. In other words, setting a small $k$ tends to give us very large communities, which are not useful in practice. However, if we set a large $k$, we are at risk of not getting any results because the constraints are stringent. Therefore, finding proper parameters is challenging in these models~\cite{chu2020finding}. 
To alleviate this issue, some approaches return the highest-order core/truss, i.e., the $k$-core or $k$-truss that contains the query nodes while $k$ is maximized. However, such methods may still return undesired results~\cite{sozio2010community} because most of the nodes in real-world networks have low order of core/truss~\cite{shin2018patterns}. As a result, the maximum $k$ is still naturally small.

To address the above problem in community search, we turn to the graph modularity, which is a parameter-free measure. 
Graph modularity~\cite{newman2006modularity} is the fraction of the edges that fall within the given groups minus the expected fraction if edges were distributed at random.
Surprisingly, graph modularity has not been used for community search, although it has been popularly used in community detection~\cite{lim2016blackhole,yang2015defining}, which identifies communities to maximize the graph modularity without the constraints of query nodes.
Unfortunately, using such classic modularity for community search still has several limitations. First,  
the classic modularity suffers from the free-rider effect~\cite{wu2015robust} -- the resultant community may contain many nodes not related to the query nodes;
this effect is illustrated by Figure~\ref{DMCS_fig:example}. Suppose the query node is $u_1$; then a desirable community should be $A$ because it is densely connected internally and sparsely connected externally. However, based on the definition of the classic graph modularity, the community $A\cup B$ has a higher modularity. 
As a result, the whole subgraph $B$ becomes a ``free-rider'' when searching for a community that contains $u_1$. This example also simultaneously implies  the {\it resolution limit problem}~\cite{fortunato2007resolution} which concerns that the community search fails to identify a small-sized community and thus being not able to highlight some important structures. In this example, particularly, subgraph $A$ is an important structure that needs to be discovered.

\spara{New Modularity Definition.}
To mitigate the above problems, we propose the {\it density modularity}, a new modularity for community search, by seamlessly integrating the benefits of two parameter-free classic measures, namely, graph modularity~\cite{newman2006modularity} and graph density~\cite{khuller2009finding}. 
Graph density~\cite{khuller2009finding} is the ratio of the number of edges and the number of nodes. A subgraph with high modularity means that it has dense connections within the community {\it in comparison to} its connections to nodes outside the community. In other words, modularity is more about the {\it relative cohesiveness} comparing the community internals and externals. Subgraph density, in contrast, describes the ratio between the number of subgraph edges and the number of nodes within the subgraph~\cite{khuller2009finding,charikar2000greedy,goldberg1984finding}. In other words, density is more about the {\it absolute cohesiveness} of the community. 
By incorporating the graph density measure into modularity, we are able to capture both the \textit{relative} and \textit{absolute} cohesiveness of a community. 
Specifically, the classic modularity summarizes the difference between the number of internal edges and the expected fraction of random edges, and then the score is divided  by the number of total edges for the normalization.
In the proposed density modularity, we replace the normalization term with the size of resultant community, which reflects the density. 

Notably, the new density modularity is characterized with two salient features. (1)~Density modularity still substantially inherits the benefit of classic modularity because it remains the second multiplicative factor that depicts the contrast of community internals and externals. (2)~This density modularity can be reinterpreted in terms of a classic graph density~\cite{khuller2009finding}, i.e., it can be rewritten as  the difference between the graph density minus the community-sized normalized  fraction of the randomized edges. 
Hence, a high density modularity also typically means a high density. (3)~The mediation of modularity and density alleviates the aforementioned issues. For example, $A\cup B$ in Figure~\ref{DMCS_fig:example} has a smaller density than $A$, and thus, the density modularity based community search will return us $A$ instead of $A\cup B$.

\spara{New Problem.}
In this paper, we define the \underline{D}ensity \underline{M}odularity \underline{C}ommunity \underline{S}earch  ({\DMCS}) problem, which aims to find a community containing all the query nodes such that the density modularity of the identified community is maximized.

In a nutshell, the benefits of using density modularity for {\DMCS} are summarized as follows.

\begin{itemize}[leftmargin=*]
\item Compared with $k$-core and $k$-truss, {\DMCS} is parameter-free, and therefore it does not suffer from the parameter-sensitivity issue; it considers the internal and external edges simultaneously to identify a community. 

\item Compared with the classic graph modularity, it alleviates the free-rider effect and the resolution limit problem. 
A direct benefit is that it avoids finding a giant community. We prove both theoretically and empirically that density modularity has strictly better performance than the classic modularity in alleviating the free-rider effect and the resolution limit problem 
(See Section~\ref{DMCS_sec:dm_property}).
\end{itemize}

Similar to the classic modularity maximization problem~\cite{dasgupta2013complexity,brandes2006maximizing}, the {\DMCS} problem is NP-Hard (Section~\ref{DMCS_sec:prob_def}). Therefore, it is prohibitively expensive to compute an exact solution. On the technical side, there are two main challenges for designing algorithms: (1) the community should be connected; (2) maximizing the density modularity. 
To address the challenges, our general idea is to first find nodes such that removing them will not disconnect the remaining graph.
Among such nodes, we then strategically select the node to be removed with the goal of maximizing the density modularity. We will discuss detailed techniques in Section~\ref{DMCS_sec:alg} on how to optimize the computational complexity under this algorithmic framework.

\spara{Contribution.} Our contributions  are summarized as follows: 
\begin{itemize}[leftmargin=*]
    \item \textit{Problem definition : }  To the best of our knowledge, this is the first work to incorporate both classic graph modularity~\cite{newman2004finding} and graph density~\cite{khuller2009finding} for the community search problem. We propose a new modularity named \textit{density modularity} to capture both the absolute cohesiveness and the relative cohesiveness.  
    \item \textit{Theoretical analysis : } One known issue of the classic modularity 
    is that it suffers from free-rider effects and has the resolution limit problem. In this paper, we rigorously prove that our density modularity significantly alleviates the two problems.
    \item \textit{Designing new algorithms : } Since the {\DMCS} problem is NP-hard, we propose two polynomial  algorithms.
    We show that one of our algorithms runs in only log-linear time to the graph size. 
    \item \textit{Extensive experiments} : By using real-world graphs and synthetic networks, we conduct an extensive experimental study to show the effectiveness and efficiency of our algorithms. 
     
\end{itemize}

\section{Related work}\label{DMCS_sec:relatedwork}
\vspace{-0.1cm}
\subsection{Community Search}\label{DMCS_sec:related_comm}

The community search problem was first proposed by ~\cite{sozio2010community}.
Formally, given a set $S$ of query nodes in $G$, community search aims to discover a connected and dense subgraph, known as a community, that contains all the query nodes in $S$. The community is expected to be cohesive, implying that the nodes within the community are intensively linked between each other. A plethora of community search algorithms are proposed~\cite{cui2014local,yuan2017index, wu2015robust,barbieri2015efficient,fang2017effective, kim2020densely,fang2016effective,huang2014querying,huang2015approximate}, with various definitions of cohesiveness. Below we describe representative ones and more can be found in the survey paper~\cite{fang2020survey}.

Sozio et al.~\cite{sozio2010community} propose to extract a $k$-core community with the minimum degree $k$ and prove the hardness of the problem. They propose a global-search based (GS) algorithm by incorporating a peeling strategy~\cite{charikar2000greedy}. Cui et al.~\cite{cui2014local}  improve the search efficiency with a local search (LS) algorithm, which incrementally expands the community from the query node. 
Observing that the community returned by the GS algorithm~\cite{sozio2010community} can be too large, Barbieri et al.~\cite{barbieri2015efficient} aim to minimize the community size while allowing multiple query nodes. They propose a local greedy search algorithm by incorporating the Steiner-tree algorithm. 
Another category of community model is the triangle-based community model (also known as the $k$-truss model)~\cite{huang2014querying,akbas2017truss,jiang2021efficient,liu2021efficient}. Given an integer $k>2$, the model requires that every edge in the community should be involved in at least $(k-2)$ triangles. 
\textcolor{black}{
Instead of specifying $k$ for $k$-core and $k$-truss community search models, several works aim to find the highest-order core~\cite{sozio2010community, cui2014local} or truss~\cite{huang2015approximate} to find cohesive subgraphs.
Recently, Yao et al.~\cite{yao2021efficient} propose a size-bounded community model by maximizing the minimum degree.  }

\textcolor{black}{
Wu et al.~\cite{wu2015robust} study the existing goodness functions for community search problem. They find that most goodness functions like minimum degree suffer from the free-rider effect problem. 
Compared with  \cite{wu2015robust}, our problem has three main advantages: (1) we do not require any user parameters such as decay factor $c$, a parameter $K$ for controlling search space, or a parameter $\eta$ for controlling the degree of the non-articulation nodes to be removed; (2) we consider both internal and external edges by considering the global structure; (3) the quality of the result does not depend on the location of the query nodes. Note that \cite{wu2015robust} may find low-quality result if a query node is not in the center of a community since it prefers the nodes that are close to the query node.
In addition, the result of \cite{wu2015robust} might be sensitive to the user parameters.
}

In addition, there are studies on attributed community search, where each node is associated with attributes~\cite{huang2017attribute, huang2014querying, fang2016effective, fang2017effective, kim2020densely}. They are orthogonal to our work.

\subsection{Graph Modularity}\label{DMCS_sec:related_modu}

\textcolor{black}{
The graph modularity\cite{newman2004finding} is designed to measure the quality of community detection algorithms. Maximizing the graph modularity is NP-hard~\cite{brandes2006maximizing}, and \cite{dinh2015network} proves the inapproximability for modularity clustering with any (multiplicative) factor $c>0$.
We next discuss several representative modularity optimization algorithms. 
}

The Divisive algorithm~\cite{newman2004finding} is a top-down approach. It iteratively removes important edges to 
find a connected component with the largest graph modularity. 
The Agglomerative algorithm~\cite{clauset2004finding} is a bottom-up approach. It starts from all the singleton communities, and iteratively joins communities in pairs to maximize the modularity.
The Louvain algorithm~\cite{blondel2008fast} is a hierarchical community detection algorithm. All the nodes belonging to the same community are merged into a single giant node and modularity clustering on the condensed graphs is applied. Both community aggregation and modularity clustering are executed until maximum modularity is reached. The Louvain algorithm is one of the best modularity optimization algorithms~\cite{lancichinetti2009community}.

These approaches cannot be directly applied for community search. The reasons are two-fold. First, as they are designed for community detection, they need to compute all the communities spanning the whole network, which is costly. Second,  
it is known that modularity optimization suffers from the resolution limit problem~\cite{fortunato2007resolution}.
To mitigate the resolution limit problem, some approaches~\cite{chen2013measuring, chen2015new, guo2020resolution} have been proposed and they try to extend classic modularity by adding additional terms such as split penalty to avoid finding communities of large size for the community detection problem. 
However, these measures are proposed  for the community detection problem and cannot be used for our problem.

\section{Preliminaries}\label{DMCS_sec:preliminaries}
\vspace{-0.1cm}
A social network is modeled as a graph $G=(V, E)$ with node set $V$ and edge set $E$. 
Following previous studies~\cite{sozio2010community,cui2014local,kim2020densely}, we consider that $G$ is undirected. Given a set of nodes $C\subseteq V$, we denote $G[C]$ as the induced subgraph of $G$ which takes $C$ as its node set and $E[C]=\{(u,v) \in E|u,v \in C\}$ as its edge set. Given a set of nodes $C$, the modularity $CM(G,C)$ is defined as follows.

\begin{definition}
(\underline{Classic modularity of a community}~\cite{newman2004finding,brandes2006maximizing}). 
Given a graph $G=(V,E)$ and a set of nodes $C\subseteq V$, the classic modularity of $C$ is defined as follows. 
\begin{align}
CM(G, C) = \frac{1}{2|E|}(2l_C - \frac{d_C^2}{2|E|})
\end{align} 
where $d_C$ is the sum of degrees of the nodes in $C$, and $l_C$ is the number of internal edges in $G[C]$.
\end{definition}

Given a graph $G$, the classic modularity maximization problem aims to find disjoint communities $\mathcal{C}=\{C_1, C_2, \cdots C_g\}$ such that $\sum_{C\in \mathcal{C}} CM(G, C)$ is maximized. 

\begin{example}\label{DMCS_example:mod_CM_example}

Let us use Figure~\ref{DMCS_fig:example} to illustrate how to compute the classic modularity. Suppose that $u_1$ (purple node) is a query node. We consider two communities $A$ and $A\cup B$ in this example. Given $|E|=26$, $l_{A\cup B} = 14$,  $d_{A\cup B} = 28$, $l_{A} = 6$, and $d_{A}=14$, the  modularity of community $A$ and community $A \cup B$ is computed as follows.
\begin{itemize}[leftmargin=*]
    \item $CM(A)=\frac{1}{52}(12 - \frac{14^2}{52})=  0.158284$
    \item $CM(A\cup B)=\frac{1}{52}(28 - \frac{28^2}{52})= 0.2485207$
\end{itemize}
\end{example}

\noindent
{\bf{Limitations of Classic Modularity for Community Search}}.
To apply the classic modularity definition for community search, we need to address the resolution limit problem~\cite{fortunato2007resolution}. The resolution limit issue of modularity maximization~\cite{fortunato2007resolution} indicates that optimizing the modularity may fail to discover relatively small communities even if small communities are densely connected.
Specifically, the classic modularity compares the number of edges in a community with the expected number of edges. To measure the expected number of edges,  random null model~\cite{chen2014community,muff2005local} is used. The random null model assumes that connections between all pairs of the nodes in a network are uniformly probable~\cite{chen2014community,muff2005local}. However, this assumption 
does not hold when the graph becomes very large. We notice that the expected number edges of two communities becomes smaller when the graph is large. It implies that a small number of edges in two communities may lead to merging the two communities.

Moreover, we observe that using the classic modularity for community search suffers from the free-rider effect~\cite{wu2015robust},  which indicates that the resultant community may contain many nodes irrelevant to query nodes. 
Intuitively, if a community goodness function allows 
irrelevant subgraphs in the resultant community, we refer to the nodes in the irrelevant subgraphs as free riders. For instance, suppose that we use classic graph density ($\frac{|E|}{|V|}$) as the community goodness function and find a solution $C$. Then, if we merge the densest subgraph with the current solution $C$, the graph density will increase. It indicates that the classic graph density suffers from the free-rider effect. 
In Section~\ref{DMCS_sec:prob_def}, we discuss the two problems in detail and propose a new modularity function named density modularity that can mitigate the two problems for community search, compared with the classic modularity.

\section{Problem Definition}\label{DMCS_sec:prob_def}
 
We proceed to define density modularity and the Density Modularity based Community Search (DMCS) problem. Note that all the proofs of lemmas and theorems can be found in appendix.

\begin{definition}
(\underline{\textcolor{black}{Density Modularity}}). \textcolor{black}{Given a weighted graph $G=(V,E)$ and a set of nodes $C$, the density modularity is defined as $DM(G, C) = \frac{1}{|C|} (w_C - \frac{d_C^2}{4w_G})$ 
where $w_C$ is the sum of internal edge weights of the community $C$, $d_C$ is the sum of the node weights of the community $C$, and $w_G$ is the sum of the edge weights of the graph $G$. Note that a node weight is the sum of adjacent edge weights. }
\end{definition}

\textcolor{black}{
As $G$ is always used, we simplify notation $DM(G, C)$ to $DM(C)$ when the context is clear. For unweighted graph, the definition is $DM(G,C)=  \frac{1}{2|C|}(2l_C - \frac{d_C^2 }{2|E|})$ where $l_C$ is the number of internal edges and $d_C$ is the sum of node degrees within the community $C$. }

\begin{example}\label{DMCS_example:mod_dm_example}

Let us reuse Figure~\ref{DMCS_fig:example} to illustrate how to compute the density modularity of community $A$ and $A\cup B$. Given $|E|=26$, $l_{A\cup B} = 14$,  $d_{A\cup B} = 28$, $l_{A} = 6$, and $d_{A}=14$, the density modularity of community $A$ and community $A \cup B$ is computed as follows.
\begin{itemize}[leftmargin=*]
    \item $DM(A)=\frac{1}{8}(12 - \frac{14^2}{52})=  1.028846$
    \item $DM(A\cup B)=\frac{1}{16}(28 - \frac{28^2}{52})= 0.8076923$
\end{itemize}
\end{example}

We defer the discussion on density modularity's benefits to Section~\ref{DMCS_sec:dm_property}, and we next give the definition of {\DMCS}.

\begin{pDefinition}
(\underline{DMCS}). 
Given a graph $G=(V,E)$ and a set of query nodes $Q$, the Density Modularity based Community Search (DMCS) aims to find a connected subgraph $G[C]$ (for $C\subseteq V$) that contains $Q$ such that $DM(G,C)$ is maximized.  
\end{pDefinition}

\subsection{Benefits and Hardness }\label{DMCS_sec:dm_property}

We discuss  benefits of density modularity. We first introduce the free-rider effect \cite{wu2015robust} problem and the resolution limit~\cite{fortunato2007resolution}, and then demonstrate that our density modularity suffers less from these problems compared with the classic modularity for the community search problem. We also give the hardness of {\DMCS}.

\begin{definition}(\underline{Free-rider effect}~\cite{wu2015robust,huang2015approximate})
Given query nodes $Q$, let $S$ be an identified community based on a goodness function $f$ and $S^*$ be a (local or global) optimum solution. We consider that the goodness function suffers from the free-rider effect if $f(S \cup S^*) \geq f(S)$.
\end{definition}

We next give a crucial lemma that proves our density modularity suffers less from the free-rider effect than the classic modularity. All the proofs are in Appendix.

\begin{lemma}\label{DMCS_lemma:DM_rlp_CM}
\textcolor{black}{Whenever density modularity suffers from the free-rider effect, the classic modularity suffers from the free-rider effect as well. }
\end{lemma}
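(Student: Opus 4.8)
The statement to prove is Lemma~\ref{DMCS_lemma:DM_rlp_CM}: whenever density modularity suffers from the free-rider effect, so does classic modularity. Concretely, I need to show that for query nodes $Q$, an identified community $S$, and an optimum $S^*$, the implication $DM(S \cup S^*) \geq DM(S) \;\Rightarrow\; CM(S \cup S^*) \geq CM(S)$ holds. The natural approach is to write both inequalities explicitly using the two definitions and compare the two ``gap'' quantities $\Delta_{DM} := DM(S\cup S^*) - DM(S)$ and $\Delta_{CM} := CM(S\cup S^*) - CM(S)$, trying to show that $\Delta_{DM} > 0$ forces $\Delta_{CM} \geq 0$ (or $> 0$). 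The key structural observation is that $CM$ and $DM$ share the \emph{same} bracketed numerator $\bigl(2l_C - \tfrac{d_C^2}{2|E|}\bigr)$; they differ only in the normalizer, which is the constant $2|E|$ for $CM$ versus the community-dependent $2|C|$ for $DM$. So if I set $\phi(C) := 2l_C - \tfrac{d_C^2}{2|E|}$, then $CM(C) = \phi(C)/(2|E|)$ and $DM(C) = \phi(C)/(2|C|)$.

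First I would rewrite the hypothesis $DM(S \cup S^*) \geq DM(S)$ as $\tfrac{\phi(S\cup S^*)}{|S\cup S^*|} \geq \tfrac{\phi(S)}{|S|}$, and the desired conclusion as $\phi(S \cup S^*) \geq \phi(S)$ (since $|E|$ is a fixed positive constant, $CM(S\cup S^*) \geq CM(S) \iff \phi(S\cup S^*) \geq \phi(S)$). Since $S \subseteq S \cup S^*$, we have $|S \cup S^*| \geq |S| > 0$. Now case-split on the sign of $\phi(S)$. If $\phi(S) \leq 0$: the hypothesis gives $\phi(S\cup S^*)/|S\cup S^*| \geq \phi(S)/|S|$, and because $|S\cup S^*|\geq |S|$ one checks directly that $\phi(S\cup S^*) \geq \phi(S)$ — intuitively, scaling a non-positive-or-larger value by a bigger denominator only helps. (If $\phi(S\cup S^*)\geq 0$ this is immediate since $\phi(S)\le 0$; if $\phi(S\cup S^*)<0$, then from $\phi(S\cup S^*)/|S\cup S^*|\ge \phi(S)/|S|$ and $|S\cup S^*|\ge|S|$ we get $\phi(S\cup S^*) = |S\cup S^*|\cdot\frac{\phi(S\cup S^*)}{|S\cup S^*|} \ge |S\cup S^*|\cdot\frac{\phi(S)}{|S|} \ge |S|\cdot\frac{\phi(S)}{|S|} = \phi(S)$, using that $\phi(S)/|S|<0$ so multiplying by the larger $|S\cup S^*|$ decreases it — wait, that needs care; I'll sort the direction below.) If $\phi(S) > 0$: then $DM(S) = \phi(S)/(2|S|) > 0$, and the hypothesis gives $\phi(S\cup S^*) \geq \phi(S)\cdot\frac{|S\cup S^*|}{|S|} \geq \phi(S)$ since the ratio $|S\cup S^*|/|S| \geq 1$ and $\phi(S)>0$. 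Either way $\phi(S\cup S^*) \geq \phi(S)$, hence $CM(S\cup S^*)\geq CM(S)$, which is exactly the free-rider condition for classic modularity.

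The one place that needs genuine care — and the main obstacle — is the case $\phi(S) \le 0$ with $\phi(S\cup S^*)$ also negative, because there the inequality direction interacts subtly with the denominator comparison. The clean way to avoid the pitfall is to multiply the hypothesis $\frac{\phi(S\cup S^*)}{|S\cup S^*|} \ge \frac{\phi(S)}{|S|}$ through by the \emph{smaller} denominator: both sides times $|S| > 0$ preserves the inequality, giving $\frac{|S|}{|S\cup S^*|}\,\phi(S\cup S^*) \ge \phi(S)$. So it suffices to show $\phi(S\cup S^*) \ge \frac{|S|}{|S\cup S^*|}\,\phi(S\cup S^*)$, i.e. $\bigl(1 - \tfrac{|S|}{|S\cup S^*|}\bigr)\phi(S\cup S^*) \ge 0$. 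The prefactor is $\ge 0$, so this holds iff $\phi(S\cup S^*)\ge 0$ — and when $\phi(S\cup S^*)<0$ this argument breaks. I would therefore handle that sub-case separately: if $\phi(S\cup S^*) < 0$ then $DM(S\cup S^*) < 0$, so the hypothesis forces $DM(S) < 0$, i.e. $\phi(S) < 0$; then since $0 < |S| \le |S\cup S^*|$ we have $\frac{1}{|S\cup S^*|}\le\frac{1}{|S|}$, and multiplying the \emph{negative} number $\phi(S\cup S^*)$... this is exactly the delicate step, so the honest route is: from $\frac{\phi(S\cup S^*)}{|S\cup S^*|}\ge\frac{\phi(S)}{|S|}$ multiply both sides by $|S\cup S^*|>0$ to get $\phi(S\cup S^*)\ge \frac{|S\cup S^*|}{|S|}\phi(S)$; now $\frac{|S\cup S^*|}{|S|}\ge 1$ and $\phi(S)<0$, so $\frac{|S\cup S^*|}{|S|}\phi(S)\le\phi(S)$, hence $\phi(S\cup S^*)\ge\phi(S)$. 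Done. So the real work is just being disciplined about which denominator to clear and tracking the sign of $\phi$; no heavy computation is needed, and the ``strictness'' refinement (strict free-rider for $DM$ implying strict for $CM$) follows along the same lines when $\phi(S)\ne 0$, which I would note as a remark.
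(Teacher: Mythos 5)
Your structural observation---that $CM$ and $DM$ share the same numerator $\phi(C)=2l_C-\tfrac{d_C^2}{2|E|}$ and differ only in whether the normalizer is the constant $2|E|$ or the community-dependent $2|C|$---is exactly the right lens, and your first case ($\phi(S)>0$) is correct. It is in fact a cleaner packaging of the paper's own argument: the paper expands both ``no free-rider'' inequalities, shows they differ by a term $T=\frac{4|E|^2(|S^*|-|S_{int}|)}{|S|}\,CM(S)$, and concludes $T>0$ precisely because it assumes at the outset that $CM(S)>0$ (``all the classic modularity and density modularity values for identified communities are positive; otherwise the identified community is meaningless''). So the paper's proof lives entirely inside your Case 1.

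The problem is your Case 2 ($\phi(S)<0$), where the final step is a non sequitur. Writing $r=\tfrac{|S\cup S^*|}{|S|}\ge 1$, you correctly derive $\phi(S\cup S^*)\ge r\,\phi(S)$ and $r\,\phi(S)\le\phi(S)$, but these two facts only say that \emph{both} $\phi(S\cup S^*)$ and $\phi(S)$ dominate $r\,\phi(S)$; they do not order $\phi(S\cup S^*)$ against $\phi(S)$. And the gap is not repairable, because the implication is genuinely false in that regime: take $|S|=1$, $|S\cup S^*|=2$, $\phi(S)=-1$, $\phi(S\cup S^*)=-1.5$. Then $\tfrac{\phi(S\cup S^*)}{|S\cup S^*|}=-0.75\ge -1=\tfrac{\phi(S)}{|S|}$, so density modularity suffers from the free-rider effect, yet $\phi(S\cup S^*)=-1.5<-1=\phi(S)$, so classic modularity does not. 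The lemma therefore needs the positivity hypothesis the paper states (equivalently, your $\phi(S)>0$); you should adopt that assumption explicitly and delete Case 2 rather than try to prove the unrestricted statement.
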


\begin{remark}
\textcolor{black}{We note that the density modularity does not completely remove the free-rider effect. In Equation~\ref{DMCS_eq:mod_bound} of the Appendix~\ref{DMCS_appendix:FRE_lemma1}, 
if we consider that $|S_{int}|$ and $l_{int}$ are very small to be ignorable, $|S^*|=|S|$, $l_S = l_{S^*}$, and $d_S^2 > 2d_S d_{S^*} + d_{S^*}^2$, the equation does not hold, making the density modularity suffer from the free-rider effect. 
}
\end{remark}

\begin{figure}[t]
\centering
\includegraphics[width=0.99\linewidth]{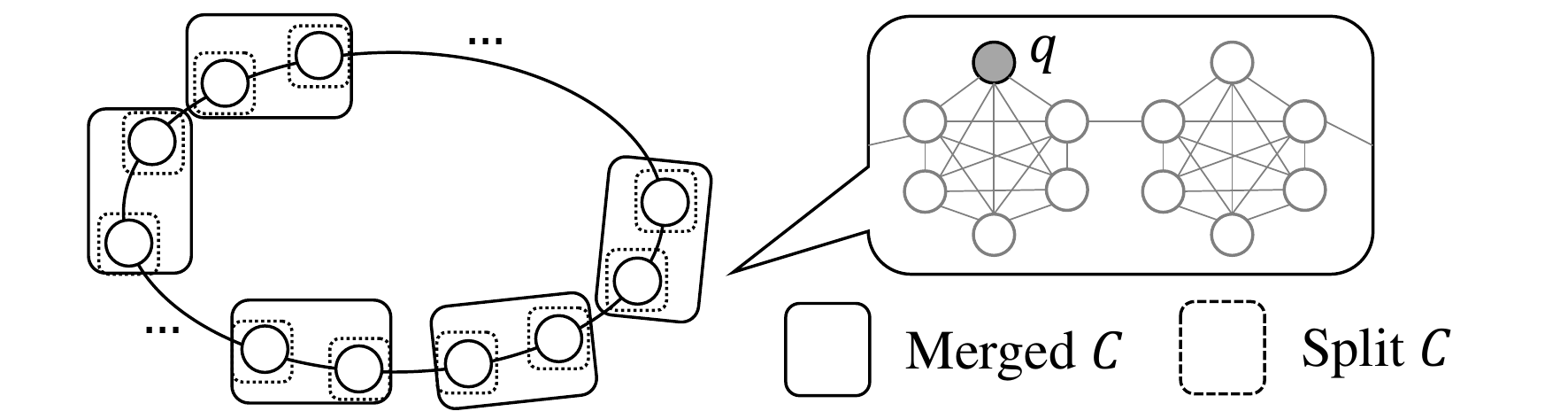}  
\vspace{-0.3cm}
\caption{Resolution limit problem}
\label{DMCS_fig:RL}
\end{figure}

\spara{Resolution limit problem.} 
We next present how our density modularity suffers less from the resolution limit problem compared with the classic graph modularity. 
One intuitive reason is that the term $|C|$ used in density modularity avoids finding large-sized communities. 
\textcolor{black}{
We next give the definition of the resolution limit problem. 
}
\begin{definition}\label{DMCS_def:RLP}
\textcolor{black}{
Given a graph $G$, query nodes $Q$, objective function $f$, a community constraint $C$, an identified subgraph $H$ satisfying $C$ and containing all the query nodes $Q$, and any subgraph $H'$ satisfying the constraint $C$ such that $G[H\cup H']$ is connected and $H\cap H'=\varnothing$, we say that the objective function suffers from the resolution limit problem for community search if there is a subgraph $H'$ such that $H\cup H'$ satisfies the constraint $C$ and $f(H\cup H') \geq f(H)$.
}
\end{definition}

\begin{remark}
\textcolor{black}{
Note that free-rider effect is  different from the resolution limit problem. The former considers the optimal solution $S^*$ to detect the effect. Particularly, $S^*$ always contains all the query nodes. In contrast, in the resolution limit problem, it is assumed that $H'$ is an independent community  of query nodes and there are no common nodes between the identified community $H$ and $H'$. Meanwhile, the connectivity of $G[H\cup H']$ must be guaranteed. }
\end{remark}

\begin{lemma}\label{DMCS_lemma:RL}
\textcolor{black}{
Whenever density modularity suffers from the resolution limit problem, the classic modularity suffers from the resolution limit problem as well. }
\end{lemma}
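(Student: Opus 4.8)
The plan is to reduce both the density--modularity and the classic--modularity forms of the resolution--limit condition to one scalar quantity and compare them. First I would route everything through the common ``numerator'' $N(C) := 2l_C - \frac{d_C^2}{2|E|}$, so that $CM(C) = \frac{N(C)}{2|E|}$ and $DM(C) = \frac{N(C)}{2|C|}$ (the weighted case is identical with $N(C) = w_C - \frac{d_C^2}{4w_G}$). Suppose density modularity suffers from the resolution--limit problem at an identified community $H \supseteq Q$: there is a subgraph $H'$ with $H \cap H' = \varnothing$, $G[H\cup H']$ connected, and $H'$ and $H\cup H'$ meeting the relevant constraint, such that $DM(H\cup H') \ge DM(H)$. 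Since $l_{H\cup H'} = l_H + l_{H'} + e$ (with $e \ge 1$ the number of edges between $H$ and $H'$), $d_{H\cup H'} = d_H + d_{H'}$, and $|H\cup H'| = |H| + |H'|$, I get the additive decomposition $N(H\cup H') = N(H) + \delta$, where $\delta := 2l_{H'} + 2e - \frac{2d_H d_{H'} + d_{H'}^2}{2|E|}$ depends only on $H'$ and how it attaches to $H$.

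Next I would translate the two conditions. Because every denominator here is positive, $DM(H\cup H') \ge DM(H)$ is equivalent to $|H|\,\delta \ge |H'|\,N(H)$, while $CM(H\cup H') \ge CM(H)$ is equivalent simply to $\delta \ge 0$. So it suffices to show that the first inequality forces $\delta \ge 0$ for a suitable witness. When $N(H) \ge 0$ this is immediate: from $|H|\,\delta \ge |H'|\,N(H) \ge 0$ and $|H| > 0$ we get $\delta \ge 0$, so the \emph{same} $H'$ witnesses that classic modularity suffers from the resolution--limit problem at $H$. This is the only non-degenerate case, and it mirrors the free-rider argument of Lemma~\ref{DMCS_lemma:DM_rlp_CM}: $CM$ and $DM$ share the numerator $N$ and differ only by a positive normaliser, which makes the implication essentially forced.

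It then remains to dispose of the degenerate case $N(H) < 0$ (equivalently $CM(H) < 0$ and $DM(H) < 0$). Here I would not reuse $H'$ but instead merge $H$ with the rest of its connected component: take $H' = V \setminus H$, which is nonempty because otherwise $H = V$ and no admissible witness exists at all, contradicting that $DM$ suffers. Then $H \cup H' = V$ and $N(V) = 2|E| - \frac{(2|E|)^2}{2|E|} = 0$, so $CM(H\cup H') = 0 > CM(H)$ and classic modularity suffers at $H$ with this witness. The hard part will be the bookkeeping in this branch --- checking that $V\setminus H$ (and $V$) satisfy the ``constraint'' clause of Definition~\ref{DMCS_def:RLP} (for the unconstrained {\DMCS} objective this is automatic; if the witness is required to induce a connected subgraph, one instead grows a connected set out of the boundary of $H$ far enough that $N$ climbs back above $N(H)$, which is possible since $N$ of the whole component equals $0 > N(H)$) --- together with pinning down the sign of $N(H)$ and the decomposition $N(H\cup H') = N(H) + \delta$ exactly. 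Everything else is routine.
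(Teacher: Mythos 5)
Your proposal is correct and follows essentially the same route as the paper's proof: both reduce $CM$ and $DM$ to the common numerator $N(\cdot)$, observe that they differ only by the normalisers $\frac{1}{2|E|}$ versus $\frac{1}{2|C|}$ with $\frac{1}{|H|}>\frac{1}{|H|+|H'|}$, and use positivity of the identified community's modularity to force $\delta\ge 0$ (the paper simply assumes $N(H)>0$ for any meaningful identified community, so your separate treatment of the degenerate case $N(H)<0$ is a minor extra refinement rather than a different argument).
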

We also give an example as follows to explain why the density modularity mitigates the resolution limit issue.

\begin{example}
One famous example is a ring structure of cliques in a network~\cite{fortunato2007resolution,bettinelli2012algorithm}, as shown in Figure~\ref{DMCS_fig:RL}. The example contains $30$ cliques consisting of $6$ nodes. We compare two possible communities containing the query node $q$. One is called the merged community which merges two $6$-cliques. The other is called the split community, which is a $6$-clique. Intuitively, the latter result is better since each clique is fully connected internally and sparsely connected externally. However, we notice that the classic modularity prefers the former result and the reasons are as follows.  

Based on the classic modularity, the modularity scores for the merged community and split community (see the following) are respectively $0.0601$ and $0.0301$, indicating that the merged case is preferred. This illustrates that the classic modularity suffers from the resolution limit problem for the community search problem.

\begin{itemize}[leftmargin=*]
    \item merged community : $31/480 - (64/(2*480))^2 = 0.06013889$
    \item split community :  $15/480 - (32/(2*480))^2) = 0.03013889$
\end{itemize}

Based on our density modularity, we do not prefer a merged community as a result since our goodness function contains the size of the identified community. Thus, our density modularity mitigates the resolution limit problem compared with the classic modularity. 
\begin{itemize}[leftmargin=*]
    \item merged community : $31/12 - (64^2/(4*480*12))) =   2.405556$
    \item split community : $15/6 - (32^2/(4*480*6)))  = 2.411111$
\end{itemize}
\end{example}  

\begin{remark}
\textcolor{black}{
We note that the resolution limit problem is not a purely community-size related matter. We consider that an objective function suffers the resolution limit problem if it finds the sub-optimal merged subgraph as a result even if it consists of well-connected multiple communities with very sparse external edges. Thus, a community search model with size constraints cannot resolve the problem. In addition, selecting the suitable size of a community can still be challenging. 
}
\end{remark}

\noindent
{\bf Hardness.} We conclude this Section by giving the hardness of the DMCS problem with proof in Appendix.

\begin{theorem}\label{DMCS_theorem:NP}
The {DMCS} problem is NP-hard. 
\end{theorem}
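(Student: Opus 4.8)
The plan is to reduce from a known NP-hard problem, and the most natural candidate is the classic modularity maximization problem (or its clustering variant), which the paper has already cited as NP-hard~\cite{brandes2006maximizing,dasgupta2013complexity}. However, that reduction is awkward because classic modularity maximization partitions the whole graph into many communities, whereas {\DMCS} selects a single connected subgraph containing fixed query nodes. A cleaner route is to reduce from a more elementary NP-hard problem whose structure matches the single-subgraph selection: I would try the \emph{densest-$k$-subgraph} problem or, better, a decision problem that controls size explicitly, such as \textsc{Clique} or \textsc{$k$-Densest-Subgraph}. The idea is that, for a fixed target size $|C|=k$, maximizing $DM(G,C) = \frac{1}{2k}\bigl(2l_C - \frac{d_C^2}{2|E|}\bigr)$ over connected $C$ of size $k$ containing $Q$ is essentially maximizing a trade-off between the internal edge count $l_C$ and the squared volume $d_C^2$; with suitable gadgets one can force $d_C$ to be (nearly) constant across all candidate sets, so that maximizing $DM$ becomes maximizing $l_C$, i.e., finding the densest connected $k$-vertex subgraph through the query node — which is \textsc{Clique}-hard.

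Concretely, the key steps I would carry out are: (1) pick the source problem — I expect \textsc{Clique} to work: given $(H,k)$, decide whether $H$ has a $k$-clique. (2) Build an instance $G$ of {\DMCS}: take $H$, attach a single query node $q$ (or a small query gadget) adjacent to every vertex of $H$, and possibly pad each vertex of $H$ with pendant/regularizing edges so that all vertices of $H$ have the same degree $\Delta$ in $G$; this makes $d_C$ depend only on $|C|$, not on which vertices are chosen. (3) Argue connectivity is automatic because $q$ is adjacent to everything, so the connectivity constraint imposes no real restriction. (4) Fix the comparison: show that among all connected node sets $C$ containing $q$, the density-modularity optimum has a prescribed size $k+1$ — this may require a separate argument or an additional size-enforcing gadget, since {\DMCS} does not take $k$ as input. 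One way is to choose the padding and the number of pendant structures so that the $\frac{1}{|C|}$ normalization and the $-d_C^2/(4w_G)$ penalty make any $C$ of size $\neq k+1$ strictly worse than the best $C$ of size $k+1$. (5) Conclude: with $d_C$ fixed for $|C|=k+1$, $DM(C)$ is an increasing affine function of $l_C$, so $G$ has a {\DMCS} solution of density-modularity value $\geq \theta$ iff $H$ has a $k$-clique.

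The main obstacle is step (4): because {\DMCS} is parameter-free, I cannot simply hand the algorithm the target size $k$, so the reduction must \emph{encode} the desired size into the objective landscape — engineering gadget weights (or exploiting the weighted version of density modularity, which the paper explicitly allows) so that the global maximum of $DM$ is attained exactly at a set that reveals a $k$-clique. A secondary subtlety is handling the cross term $d_C^2 = (2l_C + \mathrm{cut}(C))^2$: when I add or remove a vertex, both $l_C$ and the cut change, so "fixing $d_C$" really means fixing the \emph{volume} $\sum_{v\in C}\deg(v)$, which the degree-regularization gadget achieves, but I must double-check that the pendant edges used for regularization do not themselves create more attractive subgraphs. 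If the \textsc{Clique} reduction proves too delicate, the fallback is to mimic the standard modularity-hardness reduction from \textsc{3-Partition} or from graph bisection, adapting it to the single-community, query-anchored setting; I expect that to go through but to be more calculation-heavy.
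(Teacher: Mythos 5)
Your plan is a reasonable direction, but it is not yet a proof: the step you yourself flag as the main obstacle --- forcing the density-modularity optimum to live at the prescribed size $k+1$ without being able to hand the algorithm a size parameter --- is exactly where the argument currently has a hole, and none of the proposed fixes (pendant regularizers, weight engineering) is actually constructed or verified. The difficulty is real, not cosmetic: in the decision version you must show that \emph{no} connected $C\ni q$ of any size reaches the threshold $\theta$ unless $H$ has a $k$-clique, and since $DM(C)=\frac{l_C}{|C|}-\frac{d_C^2}{4|E|\,|C|}$ trades $l_C$ against both $|C|$ and the squared volume, a moderately dense subgraph of the wrong size can in principle beat a sparse-but-correctly-sized one. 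Your degree-regularization gadget fixes $d_C$ only as a function of $|C|$, which reduces the comparison to a one-dimensional optimization over $k$, but you still owe a proof that this one-dimensional function peaks where you want it to, and the pendant edges you add to equalize degrees change $|E|$ and create new candidate subgraphs that must be excluded. Until that calculation is done, the reduction from \textsc{Clique} does not go through.

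The paper avoids this entire issue by reducing from \textsc{Set Cover} and exploiting the \emph{connectivity constraint} rather than the objective landscape to control which nodes must be selected. It places all item-nodes $U$ plus an auxiliary node $q$ into the query set $Q$, so any feasible community is forced to include a subset $X$ of set-nodes that simultaneously connects $q$ to every item --- i.e., $X$ must be a set cover --- and then shows by a derivative computation that $DM$ is strictly decreasing in $|X|$ (padding with a large pendant layer $T$ and self-loops to make the monotonicity hold), so the optimum is a \emph{minimum} set cover. This buys exactly what your step (4) is missing: the combinatorial structure of the answer is enforced by feasibility, and the objective only needs to be monotone in one direction, which is far easier to certify than pinning down the location of a global maximum. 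If you want to salvage your route, I would suggest borrowing that trick --- use a large query set to constrain the feasible region --- rather than trying to sculpt the maximum of $DM$ with gadget weights alone.
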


\section{Algorithms}\label{DMCS_sec:alg}
\vspace{-0.1cm}
\textcolor{black}{
As the {\DMCS} problem is NP-Hard, there exist no polynomial time algorithms to answer {\DMCS} exactly unless P=NP. We develop  effective and efficient heuristic algorithms.}
We introduce an algorithmic framework to solve {\DMCS}, based on which we design two
algorithms: (1) Non-articulation  Cancellation Algorithm ({\NCA});
(2) Fast Peeling Algorithm (\FPA) with a better complexity. 
For ease of presentation, we first assume the query has a single query node, and relax the assumption in Section~\ref{DMCS_sec:multipleQueryNodes}.

\subsection{Algorithm Framework}\label{DMCS_sec:framework}

Our algorithmic framework (Algorithm~\ref{DMCS_alg:overall_framework}) follows a \textit{top-down greedy} fashion, which (a.k.a peeling-based approach) is widely used for the community search problem~\cite{fang2020effective,wang2021efficient,yang2020effective}. \textit{Top-down} implies that we iteratively remove nodes based on some criteria (Line 3). \textit{Greedy} indicates that we choose the best node to be removed so as to maximize the density modularity iteratively (Line 4). 
A node is \textit{removable} if it is not a query node and removing it does not disconnect the remaining graph since our resultant community must be connected.  
There are two key functions in the framework: 1) checking the existence of removable nodes (i.e., finding removable nodes) in Line 3; and 2) finding the best node to be removed in Line 4. 

The novelty of our algorithms comes from how  we design the two functions. For each function, we design two new solutions, which are  summarized in Figure~\ref{DMCS_fig:alg_summary}.  
Based on the designed functions, we have two new algorithms with different complexity: (1) Non-articulation Cancelling Algorithm; (2) Fast Peeling Algorithm. The high level idea of the two algorithms is as follows. 
Note that we also empirically evaluate the other combinations of the designed functions in Section~\ref{DMCS_sec:alg_variation}.

\begin{algorithm}[t]
\SetKw{break}{break}
\SetKw{updated}{updated}
\SetKw{return}{return}
\SetKw{false}{false}
\SetKw{true}{true}
\SetKwData{C}{C}
\SetKwData{dist}{dist}
\SetKwFunction{max}{max}
\SetKwInOut{Input}{input}
\SetKwInOut{Output}{output}
\Input{Graph $G=(V,E)$, query nodes $Q$, goodness function $M$}
\Output{A community \C}
$i \leftarrow 1$\;
$G^i \leftarrow G$\;
\While{ \text{$G^i$ has removable nodes} }{
    $v \leftarrow$ best removable node  maximizing $M( G^i-\{v\})$\;
    $G^i \leftarrow G^i-\{v\}$\;
    $i++$\; 
}
\return  $\argmax_{G''\in \{G^1, G^2, \cdots G^{i-1}\}} M(G, G'')$  \;
\caption{Overall Framework}
\label{DMCS_alg:overall_framework}
\end{algorithm} 

\spara{Non-Articulation Cancelling Algorithm ({\NCA}).} The high-level idea of {\NCA} is to iteratively find removable nodes and then remove one of them which can maximize the density modularity in a greedy manner. To compute the removable nodes, 
{\NCA} finds every \emph{non-articulation node}, which is a node whose removal does not disconnect the graph. 
We say a non-articulation node is {\it removable} if it is not a query node. For every iteration, we aim to find a removable node such that removing it retains the maximal density modularity.

To sum up, the overall procedure of {\NCA} is as follows: (1) finding all the non-articulation nodes in the current graph; (2)  computing the density modularity after removing each non-articulation node (excluding the query nodes); (3) selecting a node such that removing it can gain the largest possible resultant density modularity; and (4) going back to step (1). This process is repeated until there is no non-articulation node in the current subgraph. Finally, we return a subgraph having the largest density modularity among all the intermediate subgraphs. {\NCA} is further presented in Section~\ref{DMCS_sec:NCA}.

\spara{Fast Peeling Algorithm (\FPA).} 
The high-level idea of {\FPA} is motivated by the observation that real-world social networks are typically scale-free~\cite{barabasi2003scale,barabasi2009scale}, leading to communities with small diameters. In other words, the community itself is a small world where any two nodes in it are not distant. With this observation, we design the {\FPA} that iteratively removes nodes that are located farthest from the query node. Obviously, the farthest node from the query node is removable (i.e., does not make the graph unconnected) when there is a single query node. 
To find the best removable node, we define a new goodness function named \textit{density ratio} which is a lightweight version of the density modularity. Using the density ratio allows the algorithm to only update the density modularity of the nodes that directly connect to the removed node in the next iteration, leading to much higher efficiency.  More details on {\FPA} are presented in Section~\ref{DMCS_sec:PeelingAlgorithm}.
\vspace{-0.1cm}
\begin{figure}[ht]
\centering
\includegraphics[width=0.99\linewidth]{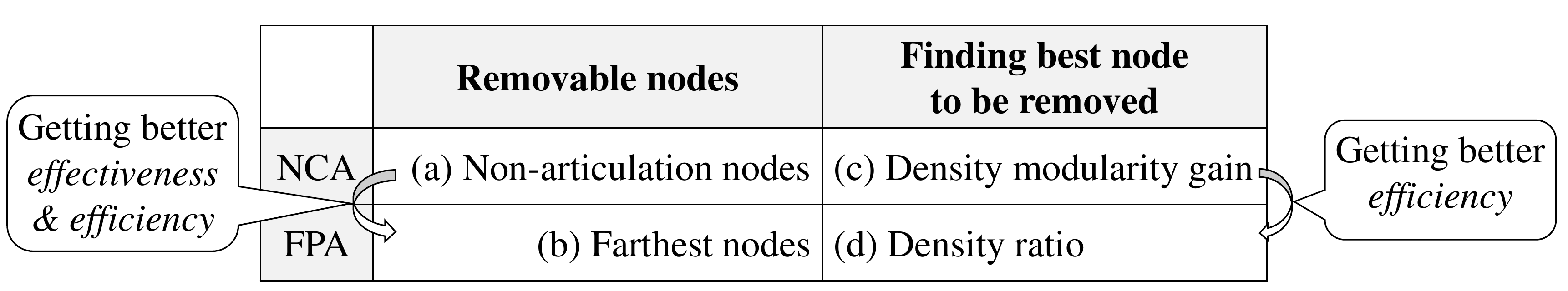}
\vspace{-0.4cm}
\caption{Key functions of proposed algorithms}
\label{DMCS_fig:alg_summary}
\vspace{-0.5cm}
\end{figure}

\subsection{Function 1 : Computing Removable Nodes}\label{DMCS_sec:sec5_2}
We design two new solutions to the problem of  how to compute removable nodes, which are based on non-articulation nodes and shortest path distances, respectively.
\subsubsection{Removable nodes based on non-articulation nodes.}\label{DMCS_sec:non_articulation}
In this solution, we identify a non-articulation node as a candidate removable node.
Instead of directly finding non-articulation nodes, we compute its complementary set that contains articulation nodes. A node $v$ in $V$ is called an articulation node of a graph $G=(V,E)$ if $G[V\setminus v]$ is unconnected.  
We compute articulation nodes based on DFS-tree~\cite{hopcroft1973algorithm}. 
In DFS-tree, a node $x$ is an ancestor of a node $y$ if the node $y$ is visited by the node $x$ through the Depth-First Search (DFS). The node $x$ in DFS-tree is an articulation node if (1) $x$ is a root node of DFS-tree and has at least two children nodes or (2) $x$ is not a root node of DFS-tree and has a child $y$ such that no node in sub-tree rooted at node $y$ has an edge connected to one of the $x$'s ancestors in DFS-tree.  
Note that after removing a node from the current subgraph, 
a non-articulation node can become an articulation node and vice versa. Therefore, it needs to examine whether a node is a non-articulation node in every iteration.

\subsubsection{Removable nodes based on shortest path distance.}
Computing all removable nodes is time-consuming. To alleviate this issue, we propose a heuristic approach to find a set of removable nodes. 
We first consider a set of farthest nodes from the query node as the removable nodes. Suppose the diameter of the social network is $D$, then the distances of other nodes to the query node can only fall into one of $D$ groups $S_1$, $S_2$, $\ldots,$ $S_D$, where $S_i$ contains all the nodes, each having a shortest path distance $i$ from the query node. Then, we can always remove the nodes in $S_D$ first because removing any of them will not disconnect the graph. In other words, the nodes in $S_D$ are all non-articulation nodes. After removing every node in $S_D$, all nodes in $S_{D-1}$ become non-articulation nodes, and we then consider removing the nodes in $S_{D-1}$, so on so forth.

The idea is more effective when the graph has a smaller diameter because it generates smaller node groups $S_i (1\leq i\leq D)$ and each group has more candidate removable nodes to select. We investigate the diameters of datasets used in our experiments, and we find that most communities have very small diameters. 
Figure~\ref{DMCS_fig:diameter_realworld} shows the results on two example datasets: for DBLP dataset, around $80\%$ communities have their diameters at most $4$; for Youtube dataset, about $94\%$ communities have diameters at most $4$. The results on other datasets are similar. Therefore, using shortest path distance to select removable nodes is effective for social networks.

\begin{figure}[t]
\centering
\includegraphics[width=0.99\linewidth]{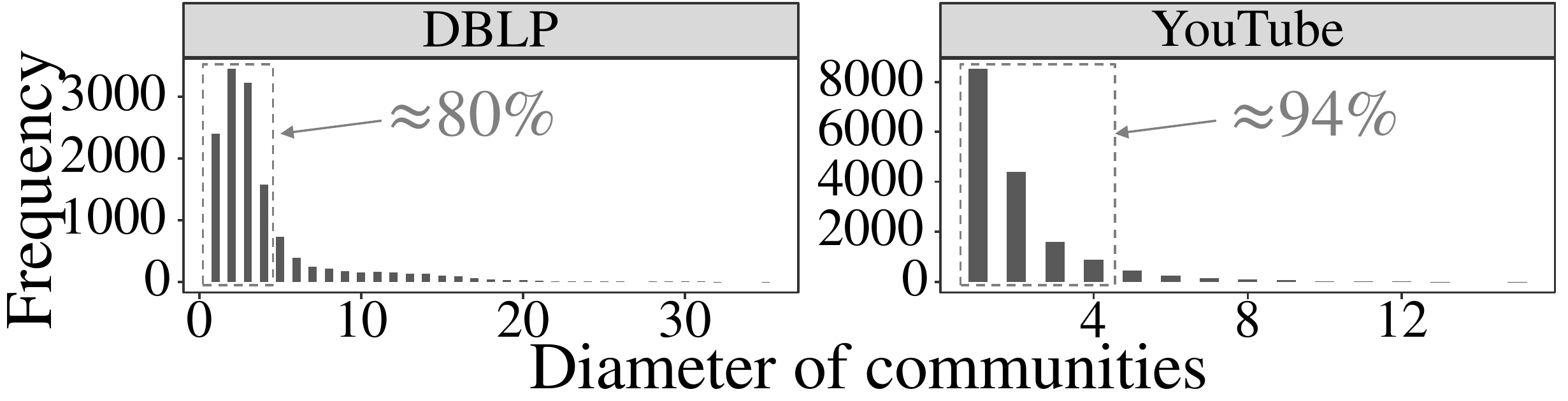}
\vspace{-0.3cm}
\caption{Frequency of community diameter}
\label{DMCS_fig:diameter_realworld}
\end{figure}

\subsection{\mbox{Function 2: Finding the Best Removable Node}}
We design two solutions to finding the best removable node, which are based on the density modularity and density ratio. 
\subsubsection{Density modularity.}\label{DMCS_sec:density_modularity_alg}  
The best removable node is the node whose removal retains the largest density modularity. Intuitively, we define the {\it updated density modularity} as the density modularity after removing a specific node. For every iteration, we choose the best node 
that has the largest updated density modularity among all the candidate removable nodes. The updated density modularity can be defined 
as the difference between the density modularity before and after removing a specific node $v$ (see Definition~\ref{DMCS_def:updateddensitymodularity}).
\begin{definition}\label{DMCS_def:updateddensitymodularity}
(\underline{Updated density modularity}).\\
\textcolor{black}{
Given a graph $S$ and a node $v$, the updated density modularity is the density modularity after removing a specific node $v$ and its value is  
$\frac{l_{S} - k_{v, S}}{|S|-1} - \frac{(d_{S} - d_v)^2}{4|E|(|S|-1)}$, 
where $k_{v, S}$ is the number of edges from the node $v$ to the subgraph $S$, $d_S$ is the sum of node degrees in $S$, and $d_v$ is the degree of the node $v$.
}
\end{definition}
In the updated density modularity, $\frac{1}{|S|-1}$, $l_S$, and $d_S^2$ are fixed when comparing two candidate removable nodes, and therefore can be ignored. Removing these fixed terms gives us the following concept called {\it density modularity gain}. We use the simpler density modularity gain, which has the same effect as the updated density modularity, in selecting the best removable node.  
\begin{definition}\label{DMCS_def:densitymodularityGainDef}
(\underline{Density modularity gain $\Lambda$}).\\
Given a graph $S$ and a node $v$, the density modularity gain $\Lambda$ is 
$\Lambda_v^S = -4|E|k_{v,S} + 2d_Sd_v - d_v^2$. 
\end{definition}

\noindent
{\bf Remarks on Density Modularity.} In the top-down greedy algorithm, we are interested in a {\it stable} goodness function to improve the efficiency. 
In particular, we consider that a function is stable if removing any node $u$ does not affect the function values of the other nodes that are not connected to $u$. If the goodness function is stable in our problem, then we only need to update the neighbour nodes of the removing node in every iteration. This can significantly boost the efficiency since we do not need to recompute the values of all candidate removable nodes. Unfortunately, the density modularity gain is unstable, and we  introduce a new goodness function  in Section~\ref{sec: density ratio}.

\begin{lemma}\label{DMCS_lemma:notstable}
The density modularity gain is not stable.
\end{lemma}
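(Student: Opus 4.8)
The plan is to exhibit a concrete counterexample — a small graph $S$, a node $u$ to be removed, and two other nodes $v_1, v_2$ with $v_2$ not adjacent to $u$, such that the density modularity gain $\Lambda_{v_2}^{S}$ changes after deleting $u$. Recall $\Lambda_{v}^{S} = -4|E|\,k_{v,S} + 2 d_S d_v - d_v^2$. Here the global quantity $|E|$ is fixed throughout a single run of the algorithm (it is the edge count of the original input graph, which does not change as we peel). However, $d_S$, the sum of node degrees within the current subgraph $S$, \emph{does} decrease when we remove $u$ (it drops by $d_u$ plus the degrees contributed by $u$'s incident edges to the rest of $S$). Since $\Lambda_{v_2}^{S}$ contains the term $2 d_S d_{v_2}$, and $d_S$ shifts even though $v_2$ is untouched and non-adjacent to $u$, the gain of $v_2$ is altered. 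That is the whole mechanism; the counterexample just makes it visible.

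Concretely, I would take $S$ to be a graph on, say, five or six vertices chosen so that: (i) there is a node $u$ whose deletion keeps $S$ connected (so it is legitimately removable); (ii) there is a node $v_2$ at distance $\geq 2$ from $u$, hence $k_{u,\{v_2\}}$-type adjacency is absent and in particular $d_{v_2}$ and $k_{v_2, S}$ are unchanged by removing $u$; and (iii) after removing $u$, the value $\Lambda_{v_2}^{S}$ differs from $\Lambda_{v_2}^{S - \{u\}}$. With $v_2$ fixed, the only thing that can change is the $2 d_S d_{v_2}$ term via $d_S$; since $d_{v_2} \neq 0$ and $d_S$ strictly decreases, the gain strictly changes. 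A path $P_5 = x_1 - x_2 - x_3 - x_4 - x_5$ already suffices: set $u = x_1$ and $v_2 = x_4$; removing $x_1$ does not disconnect $P_5$, $x_4$ is at distance $3$ from $x_1$ so $d_{x_4}=2$ is untouched, yet $d_S$ drops from $8$ to $6$, so $2 d_S d_{x_4}$ goes from $32$ to $24$ while $-4|E| k_{x_4,S}$ and $-d_{x_4}^2$ are unchanged — hence $\Lambda_{x_4}^{S}$ strictly changes. I would write out both values explicitly to close the argument.

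The only subtlety worth being careful about is the definition of ``stable'' as stated in the Remark: a function is stable if removing any node $u$ does not affect the function values of nodes \emph{not connected to} $u$. So I must make sure the witness node $v_2$ is genuinely non-adjacent to $u$ in $S$ (distance at least $2$), otherwise the example proves nothing — a neighbour changing is allowed. In the $P_5$ construction this is automatic. There is essentially no real obstacle here; the ``hard part,'' such as it is, is merely the bookkeeping of confirming that $|E|$, $k_{v_2,S}$, and $d_{v_2}$ are all invariant under the deletion while $d_S$ is not, which pins the change of $\Lambda_{v_2}^{S}$ down to the single surviving term and shows it is nonzero.
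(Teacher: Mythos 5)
Your proposal is correct and rests on exactly the same mechanism as the paper's own (one-line) proof: removing $u$ decreases the global term $d_S$, which perturbs $\Lambda_v^S = -4|E|k_{v,S} + 2d_Sd_v - d_v^2$ for every node $v$ with $d_v \neq 0$, including those not adjacent to $u$. The explicit $P_5$ witness is a harmless elaboration the paper omits, but it is not a different argument.
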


\begin{proof}
We notice that if a node $u$ is removed, the $\Lambda$ values of other nodes which are not connected to the node $u$ decrease since the term $d_S$ decreases. 
\end{proof}

\subsubsection{Density ratio.} \label{sec: density ratio}
We present a new goodness function named the density ratio (see Definition~\ref{DMCS_def:densityratio}) that is stable.

\begin{definition}\label{DMCS_def:densityratio}
(\underline{Density ratio $\Theta$}).\\
Given a graph $G$, a subgraph $S$ of $G$ and a node $v$ in $S$, the density ratio of the node $v$ is $\Theta_v^{S} = \frac{d_v}{k_{v, S}}$ where $k_{v, S}$ is the number of edges from $v$ to $S$ and $d_v$ is the degree of the nodes in $G$. 
\end{definition}

The intuition of density ratio is that it is a lightweight version of the density modularity gain.
Let us recall the density modularity gain $ \Lambda_v^S = -4|E|k_{v,S} + 2d_Sd_v - d_v^2 $. The value $\Lambda_v^S$ can be maximized when $d_v$ is large and $k_{v,S}$ is small. Therefore, we define the density ratio as the ratio between $d_v$ and $k_{v,S}$. Such a refactoring leads to the following interesting property.

\begin{lemma}
The density ratio function is stable. 
\end{lemma}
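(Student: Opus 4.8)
The plan is to unfold the definition of stability given in the Remark and check it directly against the closed form $\Theta_v^S = \tfrac{d_v}{k_{v,S}}$. Fix an arbitrary node $u$ to be removed from the current subgraph $S$, set $S' = S \setminus \{u\}$, and let $v \in S'$ be any node that is \emph{not} adjacent to $u$ in $G$. By the definition of stability it suffices to show $\Theta_v^{S'} = \Theta_v^{S}$, i.e.\ that neither the numerator nor the denominator of the density ratio of $v$ is affected by deleting $u$.

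For the numerator, $d_v$ is the degree of $v$ in the \emph{original} graph $G$, which is a static quantity that does not depend on which nodes have been peeled off so far. Hence $d_v$ is literally the same value whether we evaluate $\Theta_v$ with respect to $S$ or with respect to $S'$. For the denominator, $k_{v,S}$ counts exactly those edges incident to $v$ whose other endpoint lies in $S$. Deleting $u$ can change this count only through an edge of the form $(v,u)$; but there is no such edge, since $v$ is not adjacent to $u$. Every other edge counted by $k_{v,S}$ has its far endpoint in $S \setminus \{u\} = S'$, and conversely every edge counted by $k_{v,S'}$ is counted by $k_{v,S}$. Therefore $k_{v,S'} = k_{v,S}$.

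Combining the two observations gives $\Theta_v^{S'} = d_v / k_{v,S'} = d_v / k_{v,S} = \Theta_v^{S}$, which is precisely the stability condition: removing $u$ leaves the density ratio of every node not connected to $u$ unchanged. I would then note, as an immediate corollary, that after one peeling step only the neighbours of the removed node need their $\Theta$-values recomputed, which is the efficiency point the paper is after.

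This argument is essentially immediate once the definitions are in place, so there is no real technical obstacle; the one subtlety worth flagging explicitly is that $d_v$ must be read as the degree in $G$ rather than in the running subgraph. If one instead used the subgraph degree $d_{v,S}$ in the numerator, deleting a non-neighbour $u$ would still leave $d_{v,S}$ unchanged, so that variant would also be stable — the genuinely load-bearing fact is simply that $k_{v,S}$ ignores edges to nodes outside $S$, so it is insensitive to the removal of any node $v$ does not touch. Contrasting this with Lemma~\ref{DMCS_lemma:notstable}, where the global term $d_S$ in $\Lambda_v^S$ shifts for \emph{every} remaining node, makes clear why $\Theta$ is the right lightweight surrogate.
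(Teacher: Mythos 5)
Your proof is correct and follows essentially the same route as the paper's: both arguments observe that the numerator $d_v$ is the degree in the original graph $G$ and hence static, and that the denominator $k_{v,S}$ can only change through an edge between $v$ and the removed node $u$, which is absent when $v$ is not adjacent to $u$. Your version merely spells out the bookkeeping more explicitly than the paper does; no substantive difference.
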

\begin{proof}
\textcolor{black}{
Suppose that we remove a node $u$. Note that the variable $d_v$ for all the nodes $v\in S$ is fixed since the original degree is not changed. 
Then, only $k_{v,S}$ decreases if node $u$ is connected to node $v$; $k_{v,S}$ has no change otherwise. 
It implies that the density ratios of the nodes that are not connected to node $u$ do not change. Therefore, the density ratio function is stable.
}
\end{proof}

The stability of the density modularity can improve the efficiency since it only requires to update the density ratios of the neighbour nodes of the last removed node. Particularly, in Line 4 of the overall framework (Algorithm~\ref{DMCS_alg:overall_framework}), in every iteration we find the best removable node to maximize the goodness function. Applying the density ratio as the goodness function $M$, we only need to compute $M( G^i-\{v\})$ for $v$ that is a neighbor of the last removed node. Note that computing the density ratios of all the nodes at the initial stage is required when we use the density ratio.  

\begin{figure}[t]
\centering
\includegraphics[width=0.99\linewidth]{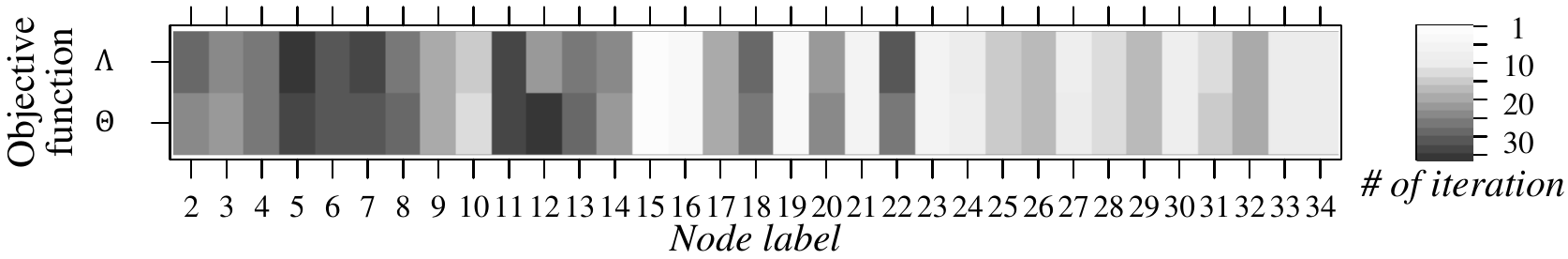}
\vspace{-0.4cm}
\caption{Difference of the update order}
\label{DMCS_fig:heating}
\end{figure}

Another interesting property of the density ratio is that it only mildly impacts the node removal order compared with using the density modularity gain. In other words, using the density ratio retains effectiveness of the density modularity gain, and also significantly improves the efficiency. To investigate how similar the two goodness functions
($\Lambda$ and $\Theta$) are, we plot a heatmap in
Figure~\ref{DMCS_fig:heating} to illustrate the difference of the node removal orders on the Karate network~\cite{zachary1977information}. The brighter color indicates that the node is removed earlier, and the darker color indicates that the node is removed later. We observe that the two goodness functions have very similar removing orders since the density ratio is closely related to the density modularity gain.

\subsection{Non-articulation Cancellation Algorithm}\label{DMCS_sec:NCA}

We present the Non-articulation Cancellation Algorithm, dubbed as {\NCA}, that uses non-articulation nodes to find removable nodes, and uses the density modularity gain to choose the best node to be removed. In other words, {\NCA} uses (a) and (c) in Figure~\ref{DMCS_fig:alg_summary}. 
The procedure of {\NCA} is as follows. 
At the initial stage, we check whether all the nodes in $Q$ are in the same connected component. If not, the algorithm terminates. Next, for all the nodes, we compute the minimum shortest path distance from the query nodes. 
Next, it identifies removable nodes by computing articulation nodes (See Section~\ref{DMCS_sec:non_articulation}), then removes the node with the largest density modularity gain. If two nodes have the same density modularity gain, we keep the node that is closely located to the query nodes. 
After removing the node, we update the solution to the currently obtained community if the latter has larger density modularity. This process is repeated until there are no more removable nodes in the current subgraph. Finally, we return the current solution which has the largest density modularity among all the intermediate subgraphs.

\spara{Time  complexity.}
The complexity of each component of {\NCA} is
\begin{itemize}[leftmargin=*]
    \item It takes $O(|V|+|E|)$ to compute all the  non-articulation nodes. This computation is repeated at most $|V|$ times. 
    \item It takes $O(|V|)$ to compute $\Lambda$ for all the nodes in the current subgraph. This computation is repeated at most $|V|$ times.
    \item Computing the minimum shortest distance from the query nodes takes $O(|Q|(|E| + |V|\log{|V|}))$. 
\end{itemize}
Therefore, the total time complexity of {\NCA} is $O(|V|(|V|+|E|)+|Q|(|E| + |V|\log{|V|}))$ since $|V|^2$  less than $|V|(|V|+|E|)$. It indicates that the main bottleneck of {\NCA} is to compute non-articulation nodes for every iteration. In the next subsection, we discuss how the Fast Peeling Algorithm resolves this issue. 

\spara{Limitation.}
One major limitation of {\NCA} is that it is sensitive to the graph structure. Figure~\ref{DMCS_fig:local_optimum} is the result of {\NCA} in a synthetic network. 
We observe that {\NCA} returns a community (marked by ``resultant community'')  that contains two connected communities even if $Q$ has a more closely related community (marked by ``Ground-truth community''). 
{\NCA} returns such a result since nearby connected community is more densely connected internally and sparsely connected externally. Thus, {\NCA} 
will remove a set of nodes in the ground-truth community 
because the benefits (density modularity) to keep the nodes in right-side communities are larger than keeping the nodes in the ground-truth community. 
To address this issue,
we present {\FPA} (in Section~\ref{DMCS_sec:PeelingAlgorithm}) which employs a distance-based removing approach presented in Section~\ref{DMCS_sec:sec5_2}.

\begin{figure}[t]
\centering
\includegraphics[width=0.99\linewidth]{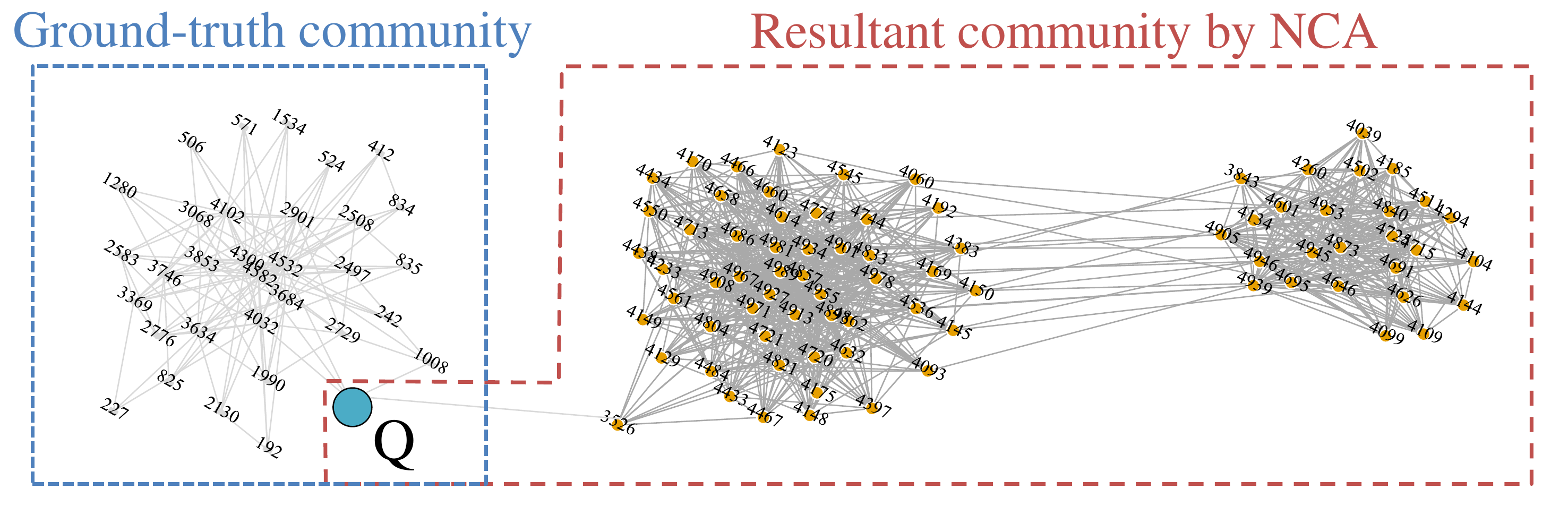}
\vspace{-0.4cm}
\caption{Local optimum solution of {\NCA}}
\label{DMCS_fig:local_optimum}
\end{figure}

\subsection{Fast Peeling Algorithm}\label{DMCS_sec:PeelingAlgorithm}

We next present the Fast Peeling Algorithm, termed as {\FPA}, that incorporates the shortest path distance to find removable nodes, as well as using the density ratio to find the best node to be removed. In other words, {\FPA} uses (b) and (d) in Figure~\ref{DMCS_fig:alg_summary}. 
The procedure of {\FPA} is described in 
Algorithm~\ref{DMCS_alg:FastPeelingAlg}. 
In Line 1--2, we initialize variables. 
In Line 3, we compute the shortest path distances from the query node. We discuss how to compute the distance when there are multiple query nodes in Section~\ref{DMCS_sec:multipleQueryNodes}. 
In Line 4, we sort all the nodes based on the distances. 
In Line 6, we find a set of nodes which are the farthest from the query node. 
In Line 7--12 we find the node $u$ that has the largest density ratio and removes $u$ from the current subgraph $S$, and update the density ratio of the neighbour nodes of $u$. 
In Line 13--14, we update the solution if the density modularity of the currently obtained subgraph has larger density modularity than the current solution. 
If the current subgraph has no more removable node, we return the solution $C$ as a result in Line 15.

\begin{algorithm}[t]
\SetKw{break}{break}
\SetKw{updated}{updated}
\SetKw{return}{return}
\SetKw{false}{false}
\SetKw{true}{true}
\SetKwData{dist}{dist}
\SetKwFunction{CC}{connectedComp}
\SetKwFunction{maxDelta}{max$\Delta$}
\SetKwFunction{max}{max$\Theta$}
\SetKwFunction{getNonArticulationNodes}{getNonArticulationNodes}
\SetKwFunction{computeDist}{computeDist}
\SetKwFunction{sortByDist}{sortByDist}
\SetKwFunction{degreeToNodes}{degreeToNodes}
\SetKwFunction{findBest}{FindBestRemovableNode}
\SetKwFunction{nodesInMaxDist}{nodesInMaxDist}
\SetKwInOut{Input}{input}
\SetKwInOut{Output}{output}
\Input{Graph $G=(V,E)$, query nodes $Q$}
\Output{A community $C$}
$S \leftarrow$ \CC{$G$, $Q$}\;
$C \leftarrow S$\;
\computeDist{$S$, $Q$}\;
$S \leftarrow $\sortByDist{$S$}\;
\While{maximum node distance in $S$ is not $0$}{
    $Cand \leftarrow $ \nodesInMaxDist{$S$}\;
    \While{$|Cand| \neq 0$}{
        $u \leftarrow \argmax_{v\in Cand}\Theta_v$\;
        $S \leftarrow S\setminus u$\;
        $Cand \leftarrow Cand\setminus u$\;
        \For{$v \in N(u, S)$}{
            update $\Theta_v^S$\;
        }
        \If{$DM(S) \geq DM(C)$}{
            $C \leftarrow S$\;
        }
    }
}
\return $C$ \;
\caption{Fast Peeling algorithm}
\label{DMCS_alg:FastPeelingAlg}
\end{algorithm}

\spara{Time complexity.} The running time of each component in {\FPA} is as follows. 
\begin{itemize}[leftmargin=*]
    \item It takes $O(|E|+|V|\log{|V|})$ to handle multiple query nodes.
    \item Computing the shortest distance from the query nodes takes $O(|E|+|V|\log{|V|})$ by using Dijkstra algorithm\footnote{When we construct a hyper graph by merging the query nodes, it can be achieved}.  
    \item It takes $O(|V| \log{|V|})$ to sort the nodes based on $\Theta$ value. 
    \item It takes $O(|E|\log{|V|})$ to remove a set of nodes. In our recursive removing process, if we remove a node, the density ratios of its neighbour nodes are updated. Since we use binary search to maintain the order of the nodes based on the density ratio, it takes $O(|E|\log{|V|})$ where adding a node takes $O(\log{|V|})$ and the maximum number of the node to be added is $|E|$.
\end{itemize}
Therefore, the total running time of {\FPA} is $O(|V||Q|^2+diam(G)+(|E|+|V|)\log{|V|})$, and it is log-linear time to the graph size. 

\subsection{Handling Multiple Query Nodes.}\label{DMCS_sec:multipleQueryNodes}

In {\NCA}, we do not need to do additional tasks to handle multiple query nodes since removing non-articulation nodes guarantees the connectivity of the remaining graph. For {\FPA}, when the query node set $Q$ contains multiple query nodes, we compute the shortest path distance as follows. For each node $v\in V$, we let $dist(v)$ be the minimum shortest path distance from the query nodes, i.e., $dist(v) = \min{(dist(q, v))}, \forall q\in Q$. 

Furthermore, when there are multiple query nodes, we cannot guarantee the connectivity of the remaining graph after removing a node. To address this issue, we find a connected component containing all the query nodes and then set the connected component as the query nodes to guarantees that removing any farthest node does not disconnect the remaining graph. 
Computing a minimum subgraph containing the query nodes is the same as finding a solution of the Steiner-tree problem~\cite{hwang1992steiner}. In this paper, we use a simple approach as follows: 
(1) we firstly randomly choose a query node $q\in Q$; 
(2) we compute the shortest paths from the query node $q$ to all the other nodes; 
(3) we pick a set of the shortest paths $T$ whose endpoints belong to query nodes; 
(4) we merge the shortest paths $T$; and 
(5) we return the subgraph induced by the nodes in $T$. This procedure takes  $O(|E|+|V|\log{|V|})$ time.

\subsection{Layer-based pruning strategy}
\textcolor{black}{To improve the efficiency of {\FPA}, we present a layer-based pruning strategy. The high-level idea of the pruning strategy is inspired by the observation of the ground-truth communities on real-world networks in Section~\ref{DMCS_sec:sec5_2}. 
It implies that from the query nodes, we may not need to iteratively remove the nodes that are far from the query nodes. 
Therefore, based on the minimum shortest path distance from the query nodes, we construct a set of layers: $L_1, L_2, \cdots L_g$. Note that a set of nodes in the same layer have the same minimum shortest path from the query nodes, and the number of layers is small as a graph diameter is normally small~\cite{watts2004six,watts1998collective}.
Next, by iteratively removing the outermost layer, we get a set of subgraphs, and 
we select the subgraph having the largest density modularity. Finally, we apply the node-removing process to the outermost layer of the selected subgraph  to find a solution.
We observe that in DBLP dataset, when the layer-based pruning strategy is applied to {\FPA}, running time gets faster up to $300$ times than {\FPA} without the pruning strategy.
Figure~\ref{DMCS_fig:layer} presents the procedure of the layer-based pruning strategy. 
}

\begin{figure}[ht]
\centering
\includegraphics[width=0.99\linewidth]{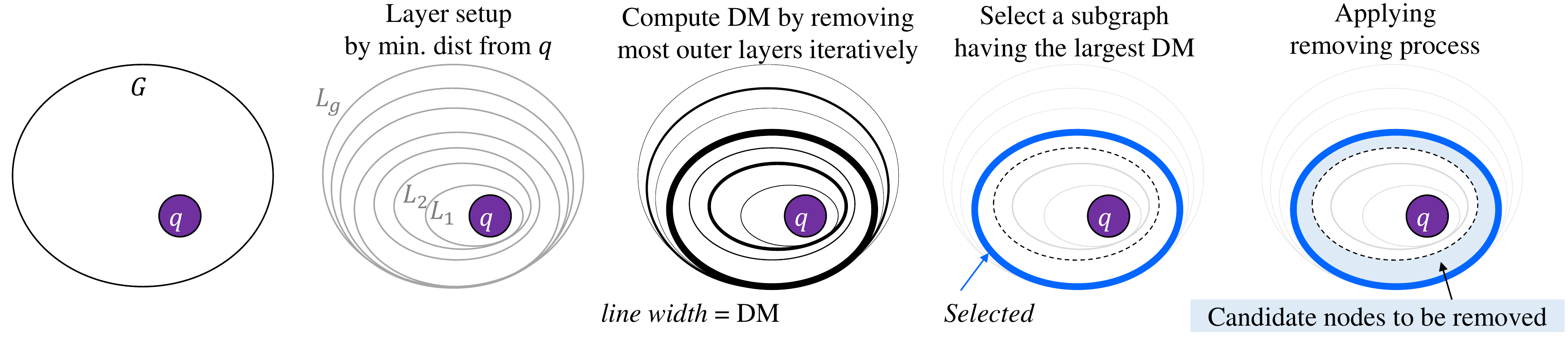}
\vspace{-0.3cm}
\caption{Layer-based pruning strategy}
\label{DMCS_fig:layer}
\end{figure}


\section{Experiments}
\label{DMCS_sec:experiments}

We evaluate our algorithms over real-world and synthetic networks. All experiments were conducted on a machine with CentOS 8 with 128GB memory and 2.50GHz Xeon CPU E5-4627 v4. 
\subsection{Experimental Setting}
\spara{Real-world Dataset.} We use 7 real-world datasets with ground-truth community information, whose statistics are reported in Table~\ref{DMCS_tab:dataset-realworld}.
We denote $|C|$ as the number of communities and `overlap' indicates whether the community membership is overlapping.

\begin{table}[ht]
\vspace{-0.3cm}
\caption{Real-world datasets}
\label{DMCS_tab:dataset-realworld}
\vspace{-0.3cm}
\centering
\begin{tabular}{c|c|c|c|c}
\hline
            & $|V|$         & $|E|$          & $|C|$        & overlap \\ \hline \hline

Dolphin~\cite{lusseau2003bottlenose}     & 62        & 159        & 2               & \xmark           \\ \hline
Karate~\cite{zachary1977information}      & 34        & 78         & 2             & \xmark           \\ \hline
Polblogs~\cite{adamic2005political}    & 1,224      & 16,718      & 2           & \xmark           \\ \hline
Mexican~\cite{gil1996political}    & 35      & 117      & 2           & \xmark           \\ \hline
\hline
DBLP~\cite{yang2015defining}        & 317,080    & 1,049,866    & 13,477      & \cmark           \\ \hline
Youtube~\cite{yang2015defining}     & 1,134,890   & 2,987,624    & 8,385    & \cmark           \\ \hline
\textcolor{black}{Livejournal}~\cite{yang2015defining}     & \textcolor{black}{3,997,962}   & \textcolor{black}{34,681,189}    & \textcolor{black}{287,512}    & \cmark           \\ \hline
\hline
\end{tabular}
\end{table} 

\spara{Synthetic Dataset.} We use the LFR benchmark dataset~\cite{lancichinetti2008benchmark} which is a synthetic dataset to evaluate the accuracy of community detection algorithms. The parameters of the synthetic network are described in Table~\ref{DMCS_tab:syn} and default parameters are underlined. 
\begin{table}[ht]
\vspace{-0.1cm}
\caption{Synthetic network configuration}
\label{DMCS_tab:syn}
\vspace{-0.3cm}
\centering
\begin{threeparttable}
\begin{tabular}{c|c|c}
\hline
\textbf{Var} & \textbf{Values}   & \textbf{Description}    \\ \hline \hline
$|V|$                & \textbf{5,000}           & the number of nodes             \\ \hline
$d_{avg}$                & 20,30,\underline{\textbf{40}}, 50  & average degree          \\ \hline
$d_{max}$            & \underline{\textbf{200}},  300,400,500             & maximum degree          \\ \hline
$\mu$              & 0.2,0.3,\underline{\textbf{0.4}} & mixing parameter\tnote{*}       \\ \hline
min $C$            & \textbf{20}              & minimum community sizes \\ \hline
max $C$            & \textbf{1,000}           & maximum community sizes \\ \hline \hline
\end{tabular}
\begin{tablenotes}
  \item[*] the ratio of inter to intra-community edges.
\end{tablenotes}
\end{threeparttable}
\end{table}

\spara{Ground-truth Communities.} 
For real-world networks, ground-truth communities are formed as follows: 
(1) Dolphin~\cite{lusseau2003bottlenose} : Communities of nodes (dolphins) are divided into the male ones and female ones;
(2) Karate~\cite{zachary1977information} : A conflict between two club members leads to nodes (club members) to divide into two groups. Each group indicates a community;
(3) Polblogs~\cite{adamic2005political} : The nodes (blogs) are separated based on their political orientation;
(4) Mexican~\cite{gil1996political} : The nodes (politicians) are grouped by their positions (civil or military);
(5) DBLP~\cite{yang2015defining} : The nodes (authors) who have published in a specific journal or conference form a community; and 
\textcolor{black}{
(6) Youtube and Livejournal~\cite{yang2015defining} : Social communities of nodes (users) are defined as user-defined groups. 
}

\spara{Algorithms.} \textcolor{black}{We compared our algorithms ({\NCA} and {\FPA}) with several baseline algorithms. We only report the results when the baseline algorithms return a result within 24 hours.}

\begin{itemize}[leftmargin=*]
    \item clique based community search~\cite{yuan2017index} ({\clique})
    \item $k$-core based community search~\cite{sozio2010community} ({\kc})
    \item $k$-truss based community search~\cite{huang2014querying} ({\kt})
    \item \textcolor{black}{$k$-edge connected component~\cite{chang2015index} (\kecc)}
    \item \textcolor{black}{CNM's agglomerative algorithm~\cite{clauset2004finding} (\CNM)}
    \item \textcolor{black}{GN's divisive algorithm~\cite{girvan2002community} (\GN)}
    \item \textcolor{black}{Luo's modularity~\cite{icwi2006paper} based greedy algorithm~\cite{luo2008exploring} (\icwi)}
    \item \textcolor{black}{Huang's basic algorithm~\cite{huang2015approximate} (\huang)}
    \item \textcolor{black}{Greedy node deletion algorithm~\cite{wu2015robust} (\wu)}
    \item \textcolor{black}{highest core-based community search (\highcore)}
    \item \textcolor{black}{highest truss-based community search (\hightruss)}
    \item Non-articulation Cancellation Algorithm ({\NCA})
    \item Fast Peeling Algorithm ({\FPA})
\end{itemize}
\textcolor{black}{For {\wu}, we implement the greedy algorithm \cite{wu2015robust} and set its parameter $\eta=0.5$. 
For {\GN}, it  
iteratively deletes a set of edges based on the betweenness centrality until no edges can be removed. For {\CNM}, it iteratively merges communities until there remains a single community. For both approaches, among the intermediate subgraphs containing all the query nodes, we pick the community which has the largest density modularity. 
For {\huang}, we implement the algorithm with a $2$-approximation ratio. 
}

\spara{Parameter Setting.} For {\kc} and {\kecc}, we set $k=3$ by default. For {\kt}, we set $k=4$ by default since $(k+1)$-truss contains $k$-core.  We report the experimental result by varying the parameter $k$ in Section~\ref{DMCS_sec:changeVarK}. For {\wu}, we set $\eta=0.5$. 

\spara{Query Sets.}
For all the networks, we pick $20$ sets ($10$ sets for small-sized datasets) of query nodes from the result of $(k+1)$-truss so that the query nodes are more likely to be located in a meaningful community. If there are over $20$ ground-truth communities, we randomly choose $20$ communities and then randomly pick a query set from each community. If there are fewer than $20$ ground-truth communities, we pick query sets such that they are most equally generated from each community.

\begin{figure*}[t]
\centering
\includegraphics[width=0.9\linewidth]{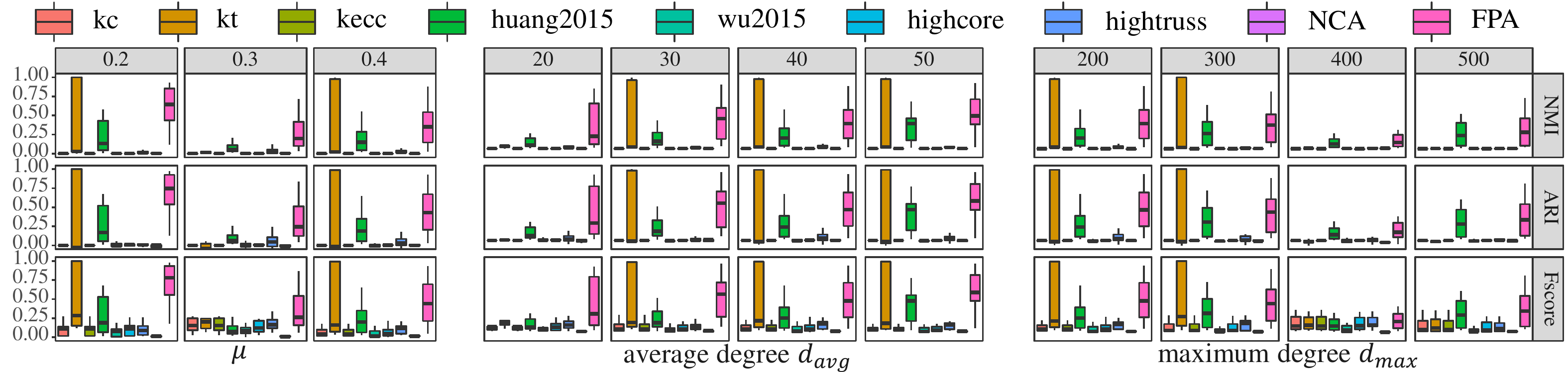}
\vspace{-0.3cm}
\caption{Effectiveness on benchmark networks}
\label{DMCS_fig:syn_all}
\vspace{-0.3cm}
\end{figure*}
\begin{figure*}[t]
\centering
\includegraphics[width=0.9\linewidth]{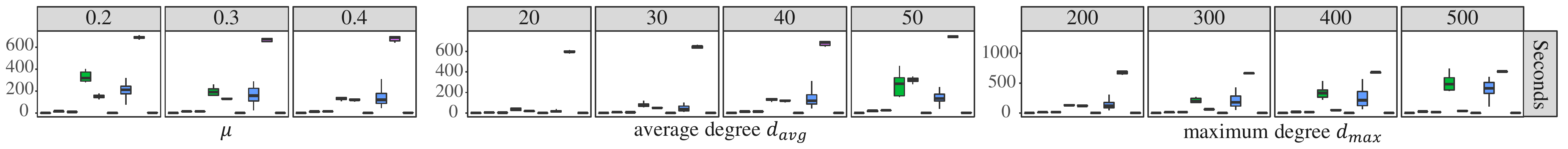}
\vspace{-0.3cm}
\caption{Efficiency on benchmark networks (same legend with Figure~\ref{DMCS_fig:syn_all})}
\label{DMCS_fig:syn_second}
\vspace{-0.3cm}
\end{figure*}

\spara{Evaluation Metric.} 
We employ the widely used metrics for the community detection problem with ground-truth information. These include the Normalized Mutual Information (NMI)~\cite{danon2005comparing} and Adjusted Rand Index (ARI)~\cite{hubert1985comparing}. We also use Fscore~\cite{van1979information} for evaluating the accuracy. 
Note that higher NMI, ARI, and Fscore indicate that the identified community and ground-truth community match better. 
To measure the accuracy in community search, we convert the community search problem into a binary classification problem. We consider a ground-truth community containing the query nodes as a true label in the binary classification problem. After we identify a community containing the query nodes, we compare our result with the true label. If there are multiple query nodes and they are not in the same ground-truth community,  this evaluation is not applicable. 
Note that for the binary classification tasks (the community search can be considered as binary classification task), Fscore measure returns overoptimistic inflated results~\cite{chicco2020advantages} since it cannot capture the ratio between positive and negative elements.

\subsection{Performance on Synthetic Graphs}

Figures~\ref{DMCS_fig:syn_all} and \ref{DMCS_fig:syn_second} show  effectiveness and efficiency results, respectively, on the benchmark dataset~\cite{lancichinetti2008benchmark}, where we vary the average degree $d_{avg}$, maximum degree $d_{max}$, and mixing parameter $\mu$ in the benchmark dataset.
We observe that {\FPA} is the fastest algorithm and returns high-quality result compared with other baseline algorithms and  {\NCA} has a scalability issue.  

\textcolor{black}{
Figure~\ref{DMCS_fig:syn_all} shows effectiveness of the returned communities. %
In general, {\FPA} and {\huang} have better accuracy than the other baseline algorithms and {\NCA}. 
This is because ground-truth communities normally have small diameters and {\FPA} considers internal and external edges simultaneously. The baseline algorithms including {\kc}, {\kt}, {\highcore}, {\hightruss}, and {\kecc} tend to return large communities, and hence often have lower accuracy. {\NCA} returns undesirable results due to the free-rider effect. We observe that {\wu} returns a community containing the nodes which are close to the query nodes. However, it does not indicate 
an accurate solution. 
We proceed to analyze the effect of average degree $d_{avg}$, maximum degree $d_{max}$, and mixing parameter $\mu$ on accuracy. 
}
First, the left 3 figures show the result when we vary the mixing parameter $\mu$. The $\mu$ is the fraction of edges that are between different communities. It implies that the larger $\mu$ is, the less detectable the community is. We observe that when $\mu$ becomes larger, the accuracy decreases since searching a community becomes harder because of the increase of the fraction of edges between communities.
Next, the middle 4 figures show that the accuracy of all the algorithms are not sensitive to the average degree $d_{avg}$. This is  because the fraction of edges that are between different communities is fixed. 
Finally, the right 4 figures show that  when the maximum degree $d_{max}$ increases, the accuracy of these algorithms decreases. This is because having a large maximum degree indicates that some nodes have a high chance to be connected to the other communities. 
For instance, if a query node is connected to the nodes in its own community, finding a community is relatively easy. However, if the query node is connected to all the nodes in a network, finding a community is relatively hard.
In the experiment, we observe that Fscore returns more optimistic results than NMI and ARI as we have discussed before. Hence, in the following experiments, we do not compare Fscore to avoid misinterpretation. 
\textcolor{black}{
In synthetic networks, we observe that the median NMI score of {\FPA} is up to 6 times higher than the median NMI score of {\huang}, which is more accurate than other baseline algorithms. 
}

Figure~\ref{DMCS_fig:syn_second} shows the efficiency results. In general, {\FPA}, {\kc}, {\kt}, {\kecc}, {\highcore}, and {\hightruss} are comparable in running time. {\NCA} is the slowest because it is required to compute all non-articulation nodes in every iteration. Note that {\FPA} and {\NCA} are designed to solve an NP-hard problem. However, {\kc} and {\kt} are designed for different community search problems, which return results with much lower quality.

\begin{figure}[ht]
\centering
\includegraphics[width=0.99\linewidth]{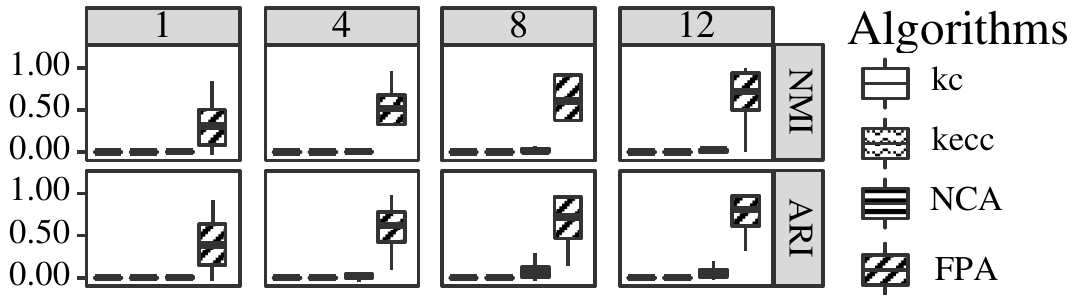}
\vspace{-0.3cm}
\caption{Effect on $|Q|$} 
\label{DMCS_fig:qsize}
\vspace{-0.1cm}
\end{figure}

\subsubsection{Handling multiple query nodes.} 

In Figure~\ref{DMCS_fig:qsize}, we use synthetic networks with default parameters to check effectiveness of {\kc}, {\kecc}, {\NCA}, {\FPA} with various sizes of the query sets. 
Note that we do not include the result of  {\kt}~\cite{huang2014querying} since it allows only a single query node. 
We randomly pick $15$ communities from the ground truth and then, by varying the number of query nodes, randomly pick a set of nodes from each community. 
\textcolor{black}{
We notice that when the size of query set increases, the accuracy of {\kt}, {\NCA}, and {\FPA} increases since query nodes are usually important clues to identify a community. 
This trend is not observed in {\kc} and {\kecc} since both consistently return large communities. 
}

\subsubsection{Scalability test.}
\begin{figure}[ht]
\centering
\includegraphics[width=0.9\linewidth]{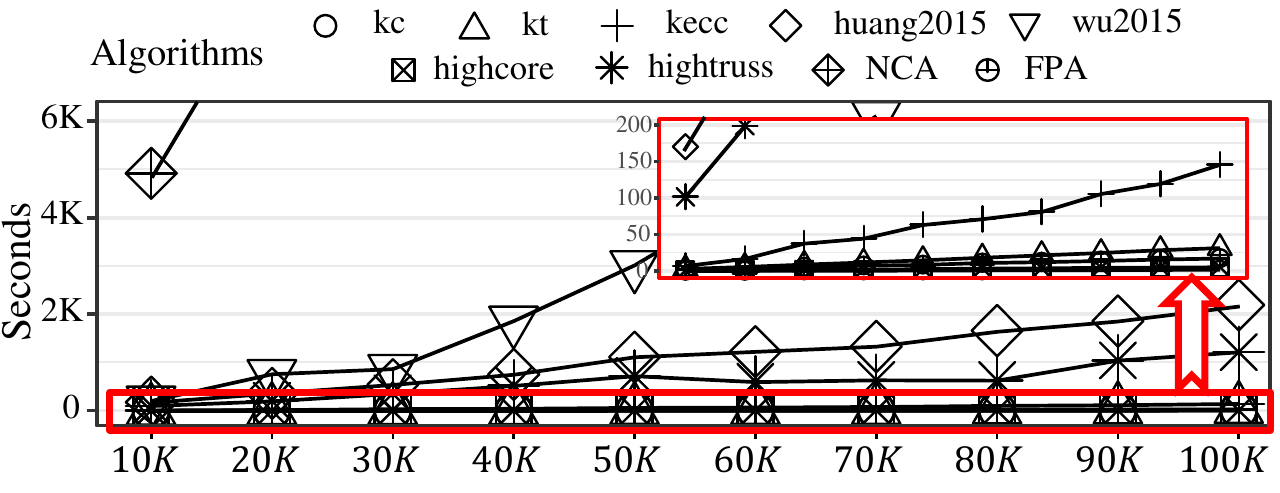}
\vspace{-0.3cm}
\caption{Scalability test}
\vspace{-0.2cm}
\label{DMCS_fig:scalability}
\end{figure}
Figure~\ref{DMCS_fig:scalability} shows the scalability test of our algorithms using synthetic networks by changing the node size from $10K$ to $100K$. 
We observe that {\NCA} is the slowest since {\NCA} is required to compute non-articulation nodes, leading to low efficiency for every iteration.  
We observe that {\FPA} is slower than {\kc} due to some additional operations like sorting and computing the shortest paths but they have the similar scaling trend. 
We observe that {\kc} and {\highcore} scale better than {\FPA} since time complexity of {\kc} (and {\highcore}) and {\FPA} is $O(|V|+|E|)$ and $O(|E|\log{|V|})$, respectively. Note that these algorithms are designed for different community search problems.

\subsubsection{Comparing with other modularity measures.}\label{DMCS:check_diff_mod}
\begin{figure}[h]
\centering
\includegraphics[width=0.99\linewidth]{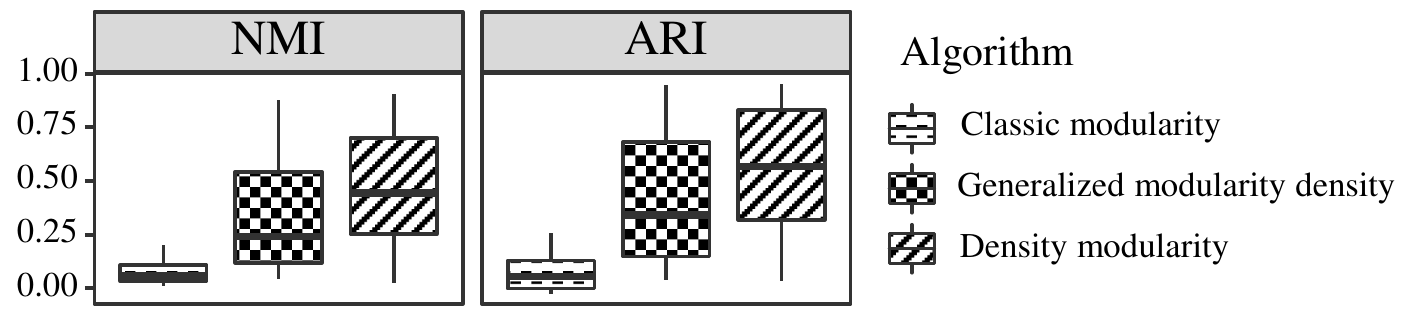}
\vspace{-0.4cm}
\caption{Different modularity scores}
\vspace{-0.2cm}
\label{DMCS_fig:diff_mod}
\end{figure}
We compare the proposed density modularity with the existing modularity measures in terms of the result quality. 
In our algorithms, among all the intermediate subgraphs, we pick a subgraph that has the largest density modularity. 
In Figure~\ref{DMCS_fig:diff_mod}, in addition to the proposed density modularity, we use two different objective functions, namely classic modularity~\cite{newman2006modularity} and generalized modularity density~\cite{guo2020resolution}, in {\FPA} to select the best subgraph.  
\textcolor{black}{For fair comparison, we use different objective functions in the removing process. }

The experiment is performed on a synthetic network with the default parameters. Figure~\ref{DMCS_fig:diff_mod} shows that the proposed density modularity returns a more accurate community compared with both classic modularity and generalized modularity density. 
The reasons are two-fold. First, the density modularity is particularly designed for the community search problem. Second, as we have proved in Section~\ref{DMCS_sec:prob_def}, the density modularity is more prone to escape from free-rider effects. 
\textcolor{black}{As an evidence, our experimental results show that the average size of the communities returned by {\FPA} incorporated with the classic modularity is 18 times larger than that of {\FPA} incorporated with the density modularity.
}

\begin{figure}[ht]
\centering
\includegraphics[width=0.99\linewidth]{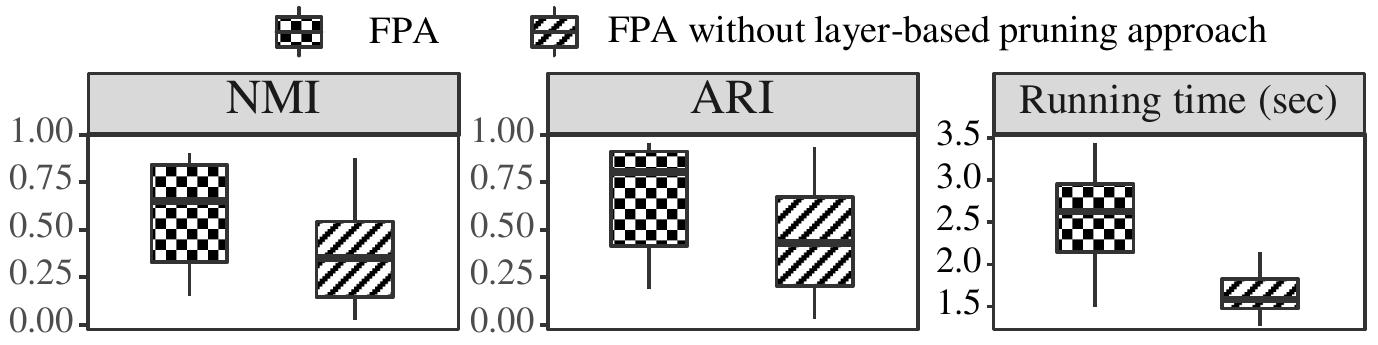}
\vspace{-0.3cm}
\caption{Effect of pruning strategy}
\vspace{-0.3cm}
\label{DMCS_fig:layer_result}
\end{figure}

\subsubsection{Efficiency of layer-based pruning strategy}\label{DMCS_sec:layer_pruning_exp}
\textcolor{black}{
Figure~\ref{DMCS_fig:layer_result} shows effectiveness and efficiency of {\FPA} and {\FPA} without pruning strategy  in a synthetic network with default settings. We observe that effectiveness of {\FPA}  is worse than {\FPA} without pruning strategy since it firstly finds a subgraph having the largest density modularity by iteratively removing distance-based layers then applies removing process. 
We observe utilizing the pruning strategy significantly improves efficiency since it does not consider many nodes which are far from the query nodes.  
}

\begin{figure}[ht]
\centering
\includegraphics[width=0.99\linewidth]{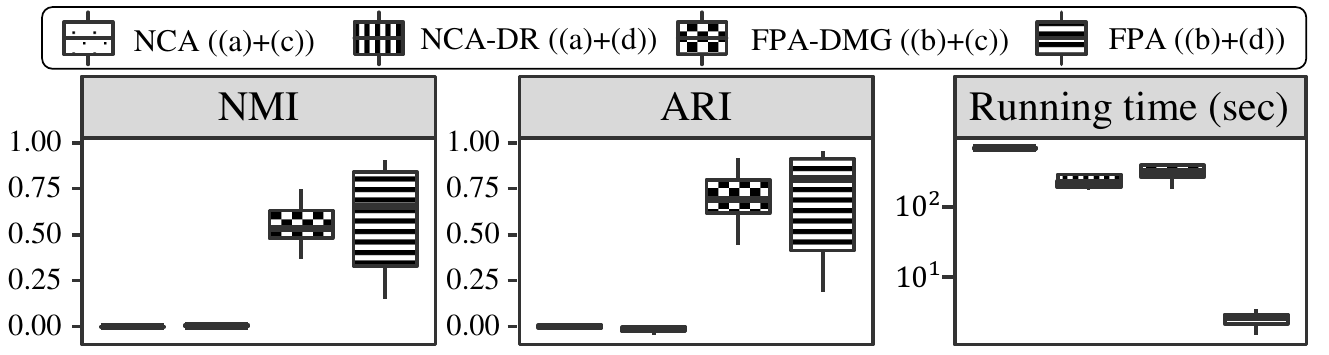}
\vspace{-0.3cm}
\caption{Variations of algorithms}
\vspace{-0.1cm}
\label{DMCS_fig:diff_alg}
\end{figure}

\subsubsection{Variation of algorithms.}\label{DMCS_sec:alg_variation}
In Section~\ref{DMCS_sec:alg}, we discuss two key functions : (1) finding removable nodes; (2) maximizing the density modularity. 
We introduce two approaches for each function, based on which we have {\FPA} and {\FPA}. This experiment is to evaluate the performance of other combinations of the proposed approached: (1) {\NCA}((a)+(c)); (2) {\NCA} with density ratio ({\NCADR}, (a)+(d)); (3) {\FPA} with density modularity gain ({\FPADMG}, (b)+(c)); and (4) {\FPA} ((b)+(d)). Figure~\ref{DMCS_fig:diff_alg} reports the result of effectiveness and efficiency test using a synthetic network with default settings. We observe that {\NCADR} has better efficiency than {\NCA} since computing density ratio is faster than computing density modularity gain. We also observe that {\FPADMG} has comparable accuracy with {\FPA}, but is 150 times slower than {\FPA} since the density modularity gain is unstable. To sum up, {\FPA} is the best among the four algorithms in terms of both efficiency and effectiveness.

\subsection{Performance on Real-World Graphs}\label{DMCS_sec:real_world_experiments}

\begin{figure}[ht]
\centering
\includegraphics[width=0.99\linewidth]{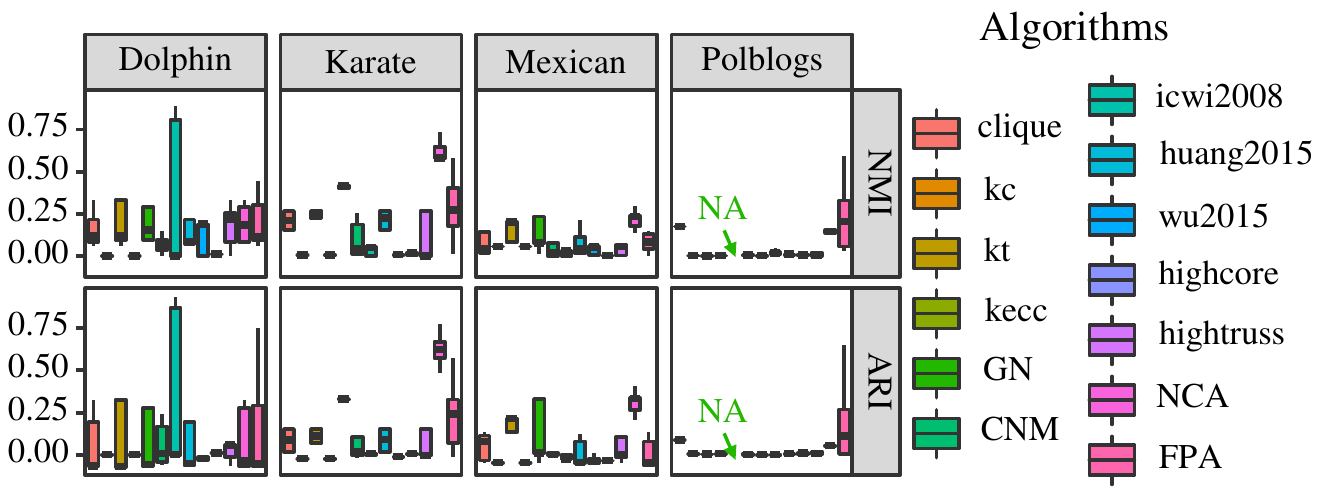}
\vspace{-0.3cm}
\caption{Effectiveness on graphs with distinct communities}
\vspace{-0.4cm}
\label{DMCS_fig:smallreal}
\end{figure}

\begin{figure}[ht]
\centering
\includegraphics[width=0.85\linewidth]{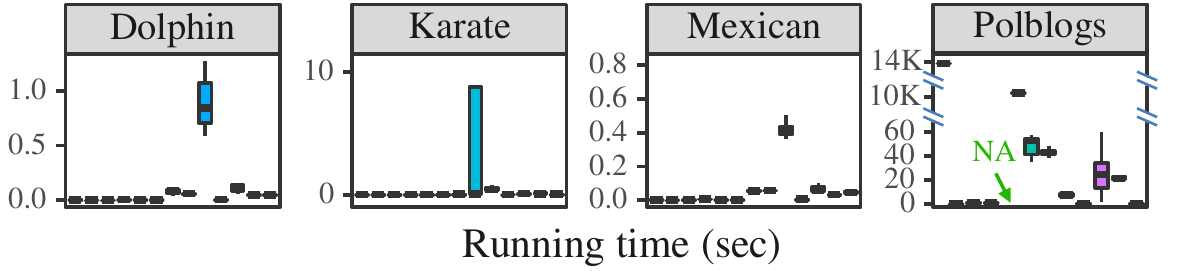}
\vspace{-0.3cm}
\caption{Efficiency 
for Figure~\ref{DMCS_fig:smallreal}}
\vspace{-0.1cm}
\label{DMCS_fig:small_time}
\end{figure}
\spara{Real-world networks with distinct ground-truth communities.}
\textcolor{black}{
Figure~\ref{DMCS_fig:smallreal} shows the results of baseline algorithms and our algorithms in Dolphin, Karate, Mexican, and Polblogs datasets. Each row indicates different measures and each column indicates different datasets.
{\NCA} and {\FPA} outperform other three baseline algorithms significantly in most cases. 
{\NCA} performs very well in Karate and Mexican networks, but less impressive in Dolphin and Polblogs networks. 
This could be because the average difference of the local clustering coefficients of two ground-truth communities is around 10\% in Karate and Mexican while the difference is around 20-50\% in Dolphin and Polblogs. 
The difference implies that there is an unbalance of the community structures in Dolphin and Polblogs networks. 
As discussed in Section~5, 
unbalanced average clustering coefficient can cause problems for {\NCA} to find high-quality results. We also notice that {\FPA} returns the best result in Polblogs network. In Dolphin network, {\hightruss} returns the best result and the results of {\NCA} and {\FPA} are comparable. 
Note that we select the query nodes which belong to $k$-core and $k$-truss. 
i.e., query nodes have the high coreness/trussness value. 
Thus, {\hightruss} and {\highcore} have benefits to get a better-quality result instead of randomly selecting a set of query nodes. {\icwi} returns very unstable results, and mostly it returns very large community as a result because its objective function prefer large-sized community.
Figure~\ref{DMCS_fig:small_time} shows the running time of the baselines and our proposed algorithms. 
{\FPA} is very fast on these datasets although it is slower than {\kc}, {\kt}, and {\kecc} on the three datasets. We also observe that {\clique}, {\wu}, and {\GN} are slower than other baseline algorithms. Note that {\GN} fails to find a solution in Polblogs network within 24 hours.   
}

\begin{figure}[ht]
\centering
\includegraphics[width=0.85\linewidth]{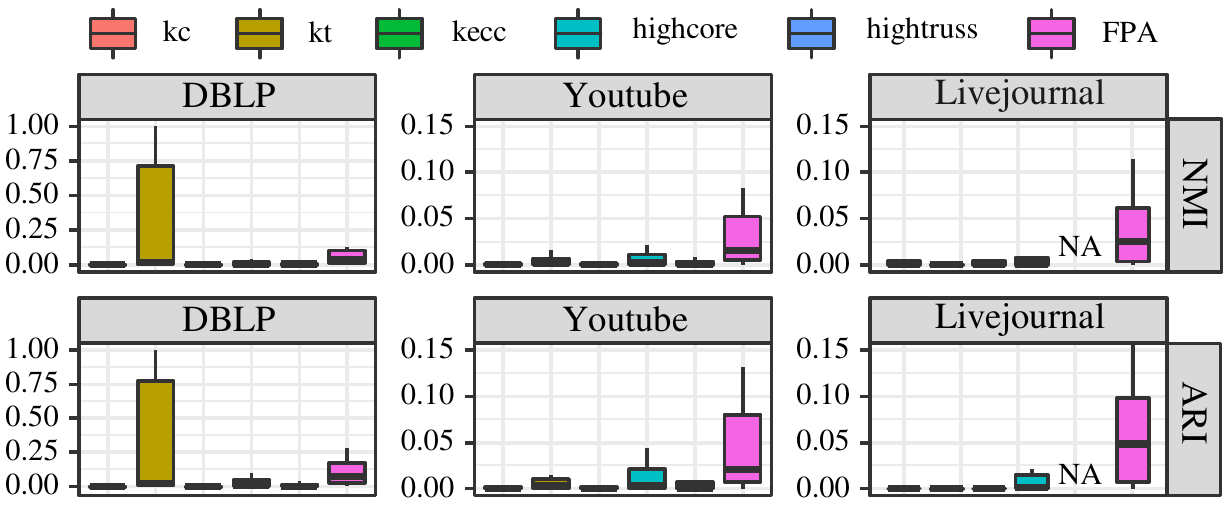}
\vspace{-0.35cm}
\caption{Effectiveness on graphs with overlapping communities.}
\vspace{-0.1cm}
\label{DMCS_fig:real_overlap}
\end{figure}

\spara{Real-world networks with overlapping ground-truth communities.} 
\textcolor{black}{
Figure~\ref{DMCS_fig:real_overlap} shows the evaluation on the real-world networks with overlapping ground-truth community information. Each row indicates different datasets and each column indicates different measures. 
Since there are multiple ground-truth communities that contain the given query node while our result is a single community, we compare our result with each of all the ground-truth communities which contain the query node, and then report the best accuracy. 
We check that real-world networks contain many small-sized communities. Because the {\kc} and {\kecc} return a large-sized community as a result in the datasets, it has relatively lower accuracy than {\FPA}. 
We notice that {\kt} returns fairly small communities in the DBLP dataset since DBLP has less triangles due to its small average degree. In the Youtube dataset, our results show that it returns either a very small or a very large community as a result. In Livejournal dataset, it consistently returns very large communities. 
}

\textcolor{black}{
In DBLP dataset, the median NMI score of {\FPA} is 0.035, which is 2.56 times higher than that of {\kt} having the largest median NMI score among the baseline algorithms. We also observe that the median ARI score of {\FPA} is 0.07, which is 3.16 times higher than the ARI score of {\kt}.
In Youtube dataset,  {\highcore} has higher NMI and ARI than other baseline algorithms. However, we observe that the median NMI score of {\FPA} is 8.5 times higher than the NMI score of {\highcore}, and the median ARI score of {\FPA} is 6.35 times higher than the ARI score of {\highcore}. 
}
\begin{figure}[ht]
\centering
\includegraphics[width=0.99\linewidth]{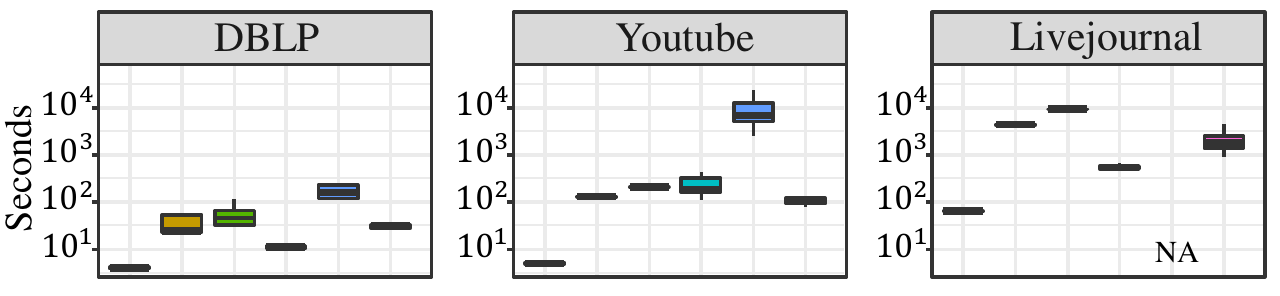}
\vspace{-0.4cm}
\caption{Efficiency 
for Figure~\ref{DMCS_fig:real_overlap}}
\vspace{-0.35cm}
\label{DMCS_fig:real_time}
\end{figure}

We observe that the accuracy values in the datasets are relatively small. This is because (1) the ground-truth communities are overlap; and (2) the average size of the communities is very small. 
The overlapping ground-truth communities make our algorithms find a large-sized community while the size of the communities is relatively small. 
Thus, the accuracy values become relatively small.

\textcolor{black}{
Figure~\ref{DMCS_fig:real_time} shows the running time of baseline algorithms and {\FPA}.
Although {\FPA} is slower than {\kc}, it still can finish in reasonable time on such large datasets. Note again that {\kc} and {\kt} are designed to solve different community search problems, which return much worse results than {\FPA}. 
}

\subsubsection{Varying parameter $k$.} \label{DMCS_sec:changeVarK}
\begin{figure}[t]
\centering
\includegraphics[width=0.99\linewidth]{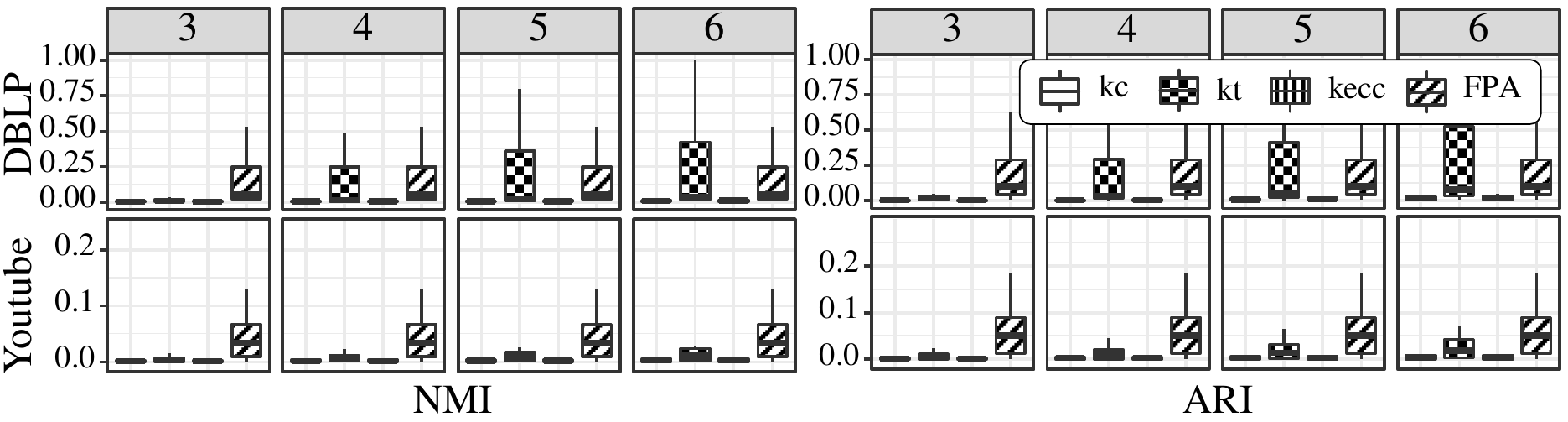}
\vspace{-0.35cm} 
\caption{Effect on $k$} 
\label{DMCS_fig:ksize}
\vspace{-0.1cm}
\end{figure}
\textcolor{black}{
In Figure~\ref{DMCS_fig:ksize}, we use DBLP and Youtube datasets with default parameters to check the result for various user parameter $k$, which is shown in the top of the figure. Each row corresponds to the same dataset and each column corresponds to a specific $k$ value. 
{\kc} and {\kecc} fail to achieve high accuracy in these cases. 
For {\kt}, it returns the best result when $k=5$ or $k=6$. For all the cases, {\FPA} outperforms baseline algorithms. It implies that  classic community search models such as {\kc}, {\kecc}, and {\kt} need a proper parameter to obtain a high-quality result. In contrast, our community model does not require any user parameters. }

\subsubsection{Case study.  }
We conduct a case study to show the usefulness of {\DMCS}. 
In DBLP dataset~\cite{kim2014link}, a node indicates an author, and an edge indicates that the two authors have co-authored at least 3 times. 
We set the query node $Q=\{\text{Philip S. Yu}\}$ and check the result of our {\FPA}, $3$-truss, and $3$-core.
We notice that our algorithm returns a small-sized community and all the authors in the community are connected to the query node. However, $3$-truss contains $157$ authors, and the query node is connected to $17\%$ of the authors in the community. $3$-core contains $1,040$ authors and the query node is connected to only $1\%$ of the authors in the community.

When we compute Betweenness centrality~\cite{brandes2001faster} and Eigen centrality~\cite{zaki2014data}, the query node has the largest centrality scores in our community.
However, in $3$-truss, we observe that the query node has the second largest centrality values among the nodes, and in $k$-core, the query node is ranked 45th in Betweenness centrality and 175th in Eigen centrality.

\begin{figure}[ht]
\centering
\subfloat[Our result]{\includegraphics [width=0.32\columnwidth]{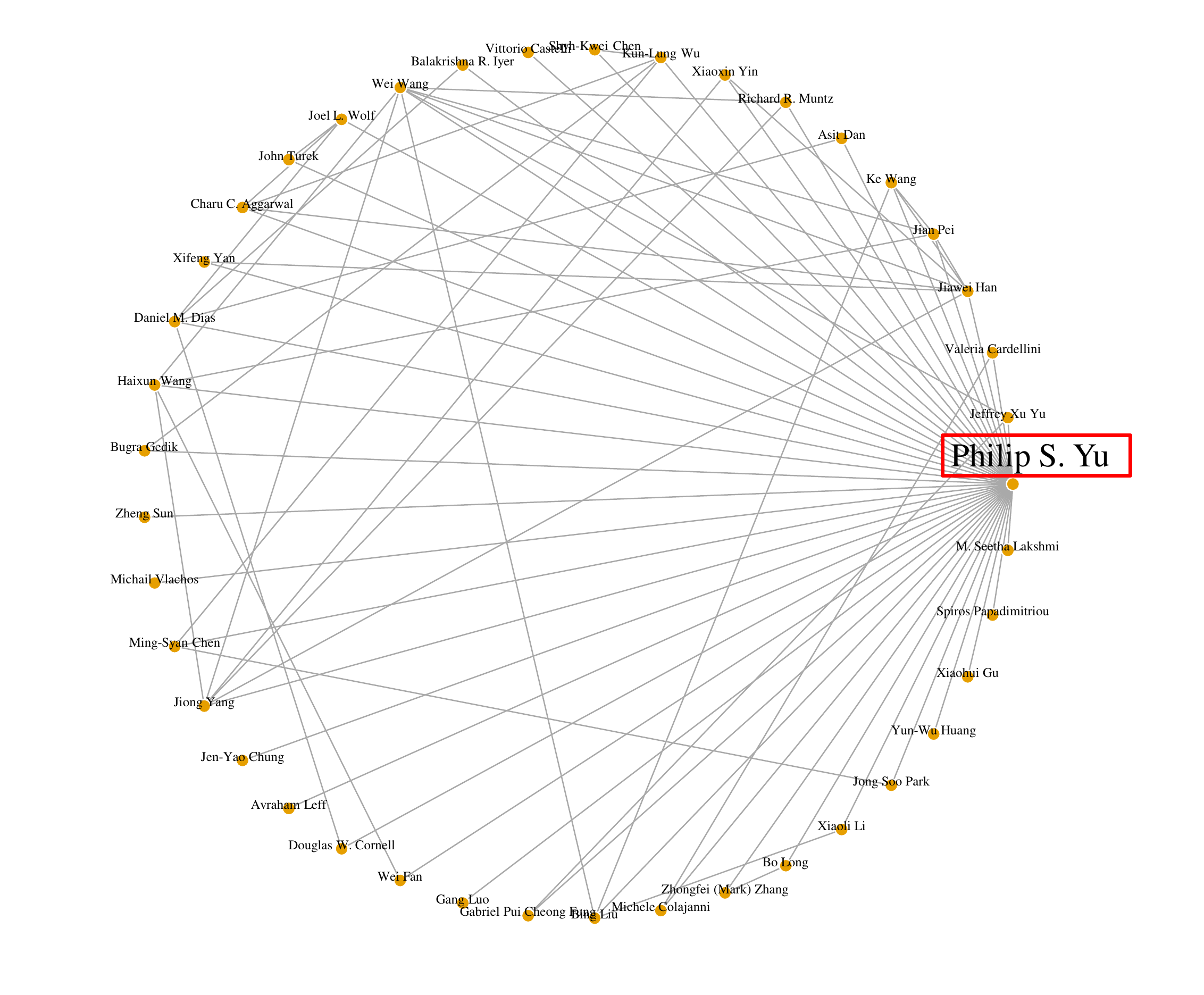}
\label{DMCS_fig:our-dblp}}
\hfil
\subfloat[$3$-truss]{\includegraphics [width=0.32\columnwidth]{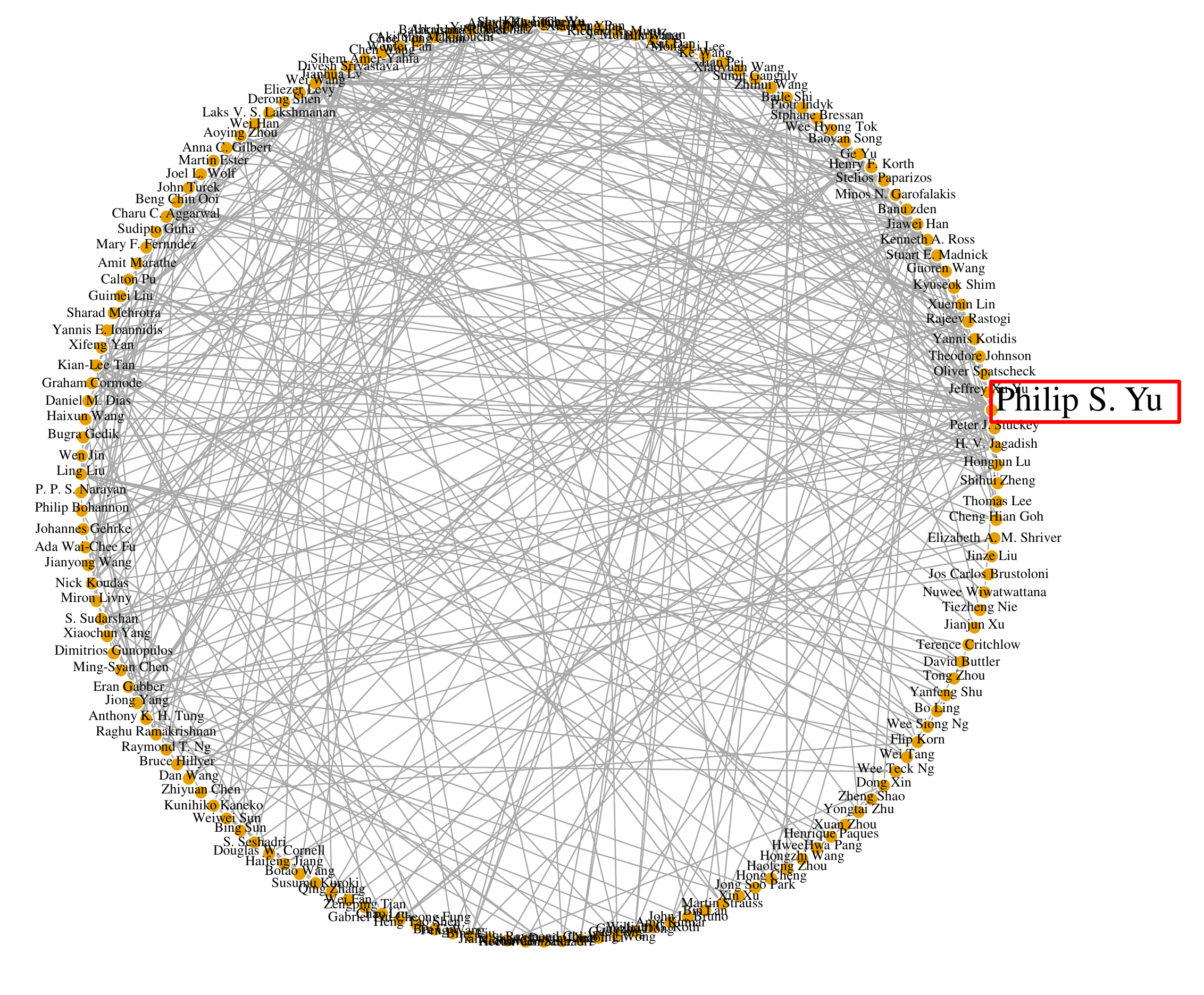}
\label{DMCS_fig:truss-dblp}}
\hfil
\subfloat[$3$-core]{\includegraphics [width=0.32\columnwidth]{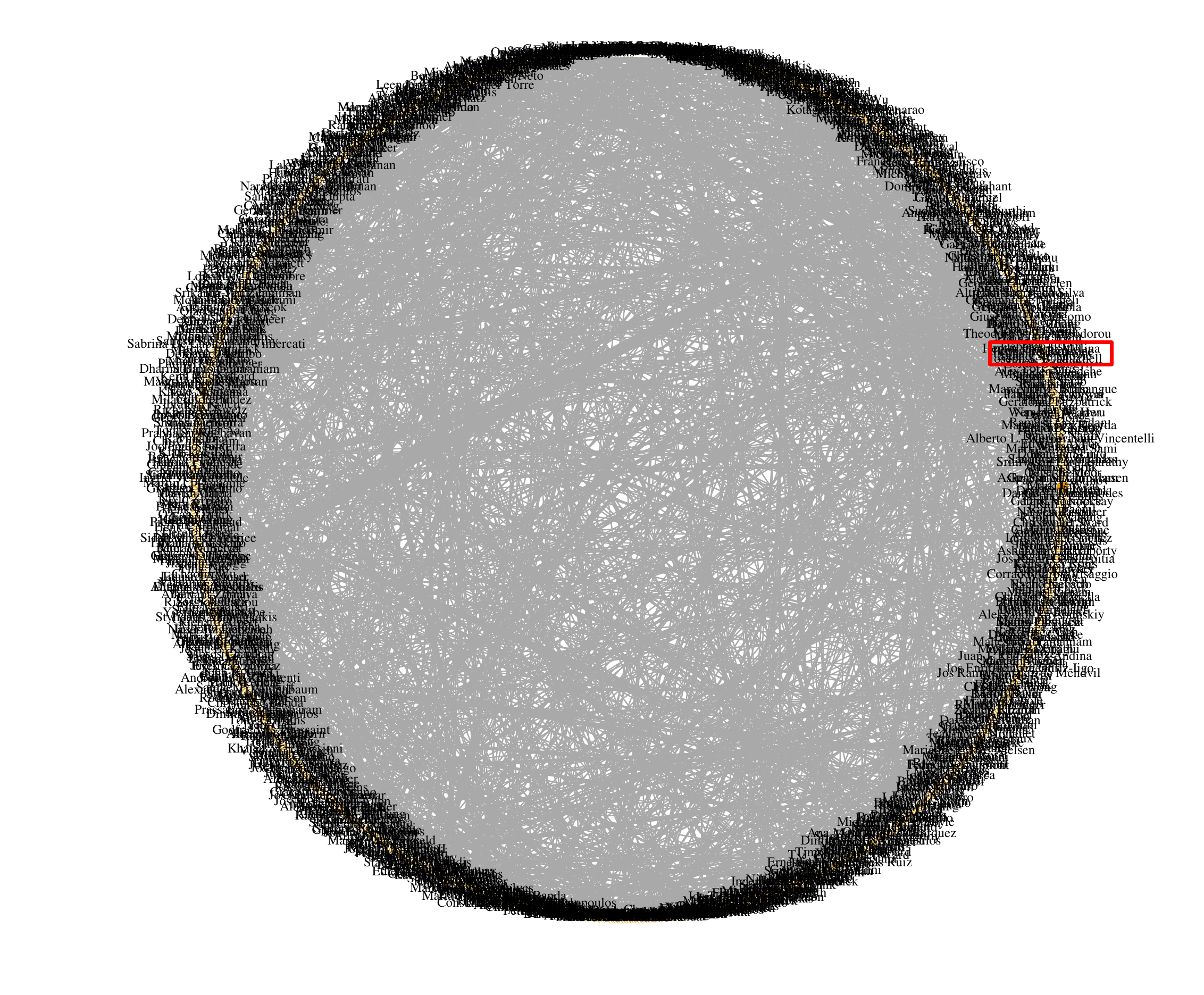}
\label{DMCS_fig:core-dblp}}
\vspace{-0.3cm}
\caption{Case study}
\vspace{-0.4cm}
\label{DMCS_fig:case}
\end{figure}

\section{Conclusion}\label{DMCS_sec:conclus}

In this work, we propose modularity-based community search (\DMCS) problem that aims to find a community which is densely connected internally and sparsely connected externally in a network while containing all the query nodes. We define a new modularity score called density modularity and prove the superiority of the density modularity compared with the classic modularity for the community search problem. We prove {\DMCS} problem is NP-hard, and design two algorithms for the problem. 
We conduct extensive experimental studies on large-scale real-world networks to demonstrate the efficiency and effectiveness of our proposed algorithms. 
In the future, we can utilize our new density modularity to solve the community detection problem since the density modularity can mitigate the resolution limit problem.  

\appendix

\section{Proof for Lemma~\ref{DMCS_lemma:DM_rlp_CM}}\label{DMCS_appendix:FRE_lemma1}
 
Let $S$ be a community and $S^*$ be the optimal community. Note that all the classic modularity and density modularity values for identified communities are positive. Otherwise, the identified community is meaningless. 
Let $l_{int}$ be the set of intersected internal edges between $S$ and $S^*$; let $d_{int}$ be the sum of degree in the intersected nodes between $S$ and $S^*$; let $S_{int}$ be the common nodes of $S$ and $S^*$. We have equations of $CM(S)$, $CM(S\cup S^*)$, $DM(S)$, and $DM(S\cup S^*)$ are defined as follows. 

\begin{itemize}[leftmargin=*]
    \item $CM(S)$ : $\frac{l_S}{|E|} - \frac{d_S^2}{4|E|^2}$ $\phantom{123123123} \bullet$ $DM(S)$ : $\frac{l_S}{|S|} - \frac{d_S^2}{4|E||S|}$
\end{itemize}
\begin{itemize}[leftmargin=*]
    \item $CM(S \cup S^*)$ : $\frac{l_S + l_{S^*} -l_{int}}{|E|} - \frac{(d_S + d_{S^*} - d_{int})^2}{4|E|^2}$
    \item $DM(S \cup S^*)$ : $\frac{l_S + l_{S^*}-l_{int}}{|S|+|S^*|-|S_{int}|} - \frac{(d_S + d_{S^*} - d_{int})^2}{4|E|(|S|+|S^*|-|S_{int}|)}$
\end{itemize}

To avoid the free-rider effect for the classic modularity, we have

\begin{align}
&CM(S) \geq CM(S\cup S^*) \notag \\
\Leftrightarrow&\frac{l_S}{|E|} - \frac{d_S^2}{4|E|^2} \geq \frac{l_S + l_{S^*} - l_{int}}{|E|} - \frac{(d_S + d_{S^*} - d_{int})^2}{4|E|^2} \nonumber\\
\Leftrightarrow 
& 0 \geq   X\label{DMCS_eq:cm_bound}
\end{align}
where $X=4|E|(l_{S^*} -l_{int}) -(2d_Sd_{S^*} - 2d_Sd_{int} + d_{S^*}^2 - 2d_{S^*}d_{int} + d_{int}^2)$.

To avoid the free-rider effect for the density modularity, we have
\begin{align}
    &DM(S) \geq DM(S\cup S^*) \notag \\
\Rightarrow&\frac{l_S}{|S|} - \frac{d_S^2}{4|E||S|} \geq \frac{l_S + l_{S^*} - l_{int}}{|S|+|S^*|-|S_{int}|} -  \frac{(d_S + d_{S^*}-d_{int})^2}{4|E|(|S|+|S^*|-|S_{int}|)}\nonumber\\
   \Rightarrow  &\begin{aligned}
        0 \geq & X   - \frac{4l_S |E|(|S^*|-|S_{int}|) + d_S^2 (|S_{int}|-|S^*|)}{|S|}
    \end{aligned} \label{DMCS_eq:mod_bound}
\end{align}

\textcolor{black}{
We notice that the difference between Equations~\ref{DMCS_eq:cm_bound} and \ref{DMCS_eq:mod_bound} is the term $T=\frac{4l_S |E|(|S^*|-|S_{int}|) + d_S^2 (|S_{int}|-|S^*|)}{|S|}$ which is positive. 
\begin{align}\label{DMCS_eq:supplementary_term_test1}
    &\frac{4l_S |E|(|S^*|-|S_{int}|) + d_S^2 (|S_{int}|-|S^*|)}{|S|}  
    =\frac{(4l_S |E| - d_S^2)(|S^*|-|S_{int}|)}{|S|}\nonumber\\
    &=\frac{4|E|^2(|S^*|-|S_{int}|)}{|S|} {\left(\frac{l_S}{|E|}-\frac{d_S^2}{4|E|^2}\right )}=\frac{4|E|^2(|S^*|-|S_{int}|)}{|S|} CM(S)>0\nonumber
\end{align}
The last inequality holds because $CM(S)>0$ and $|S^*|-|S_{int}|>0$.
Thus, when the density modularity suffers from the free-rider effect, the classic modularity suffers from the free-rider effect as well. It directly implies that escaping the free-rider effect for the classic modularity is harder than density modularity since the Equation~\ref{DMCS_eq:mod_bound} is easier to satisfy compared with Equation~\ref{DMCS_eq:cm_bound}.
}

\section{Proof for Lemma~\ref{DMCS_lemma:RL}}\label{DMCS_app:RL}
We show that {when density modularity} suffers from the resolution limit problem, the classic modularity suffers from the resolution limit problem as well. 
 
To illustrate the resolution limit problem, we reuse the observation in Lemma~\ref{DMCS_lemma:DM_rlp_CM}. In Appendix~\ref{DMCS_appendix:FRE_lemma1}, Equations~\ref{DMCS_eq:cm_bound} and \ref{DMCS_eq:mod_bound} are generalized forms to avoid the resolution limit problem for the community search. Note that $|S_{int}|$ and $d_{int}$ are $0$ since $S_1$ and $S_2$ do not overlap based on the definition of the resolution limit problem (See Definition~\ref{DMCS_def:RLP}). To overcome resolution limit problem for both CM and DM, we get the following inequality. 

\begin{align}
\begin{aligned}
&CM(S) \geq CM(S\cup S^*) \\
\Rightarrow& (l_S - \frac{d_S^2}{4|E|}) \geq (l_S + l_{S^*} - l_{int} - \frac{(d_S + d_{S^*})^2}{4|E|}) \phantom{1231231231}\\
\end{aligned}
\end{align}

\begin{align}
\begin{aligned}
    &DM(S) \geq DM(S\cup S^*)  \\
\Rightarrow& \frac{1}{|S|}(l_S - \frac{d_S^2}{4|E|}) \geq \frac{1}{|S|+|S^*|}(l_S + l_{S^*} - l_{int} -  \frac{(d_S + d_{S^*})^2}{4|E|})\\
\end{aligned}
\end{align}

\textcolor{black}{
In the above Equations, $\frac{1}{|S|}$ is much larger than $\frac{1}{|S|+|S^*|}$. Thus, it indicates that when the classic modularity does not suffer from the resolution limit problem, the density modularity does not suffer the resolution limit problem too. This also implies that when the density modularity suffers from the resolution limit problem, the classic modularity also suffers from the resolution limit problem. 
}

\section{Proof for Theorem~\ref{DMCS_theorem:NP}}\label{DMCS_app:NP}

In this proof, we reduce an instance of the set-cover problem to an instance of DMCS problem. 

Suppose that we have an instance $I_{SC}=\{I,S\}$  of the set-cover problem, where $I$ is a set of items, $S$ is sets of items whose union is equal to $I$. 
We first create four graphs. 
\begin{enumerate}[leftmargin=*]
    \item $B_1$ : A bipartite graph $B_1$ is constructed by $I_{SC}$. It contains two node sets $U=I$ and $V=S$, and edges $B_E$ connecting a node $u\in U$ to $v\in V$ if an item $u$ belongs to a set $v$. 
    Without loss of generality, we make $|U|=|V|$ by adding dummy items or sets. 
    For example, if $|U|<|V|$, we can add $|V|-|U|$ nodes to $U$ and then make connections from $|V|-|U|$ nodes to all nodes in $V$, i.e., we add $|V|(|V|-|U|)$ edges. If $|U|>|V|$, we add $|U|-|V|$ dummy nodes to $V$ without any connections. 
    \item $B_2$ : We construct a bipartite network $B_2$ consisting of two disjoint sets $V$ in $B_1$ and $T$ which contains $|V|^2$ nodes. 
    Every node $v\in V$ has exact $|V|$ neighbor nodes in $T$ and every node $t\in T$ has exact one neighbor node.  
    \item $G_1$ : We next make a graph $G_1$ which contains $U$ nodes in $B_1$ and in which every node has own self edge. 
    \item $B_3$ : We construct a bipartite graph $B_4$ consisting of a query node $q$ and $V$ in $B_1$, and $q$ is connected to all nodes in $V$. 
\end{enumerate}
Now, we construct an instance $I_{DMCS}=(G=\{B_1 \cup B_2\cup G_1\cup B_3\}, Q=\{q\cup U\})$. 
Figure~\ref{DMCS_fig:NP} shows the constructed graph.

\begin{figure}[t]
\centering
\includegraphics[width=0.99\linewidth]{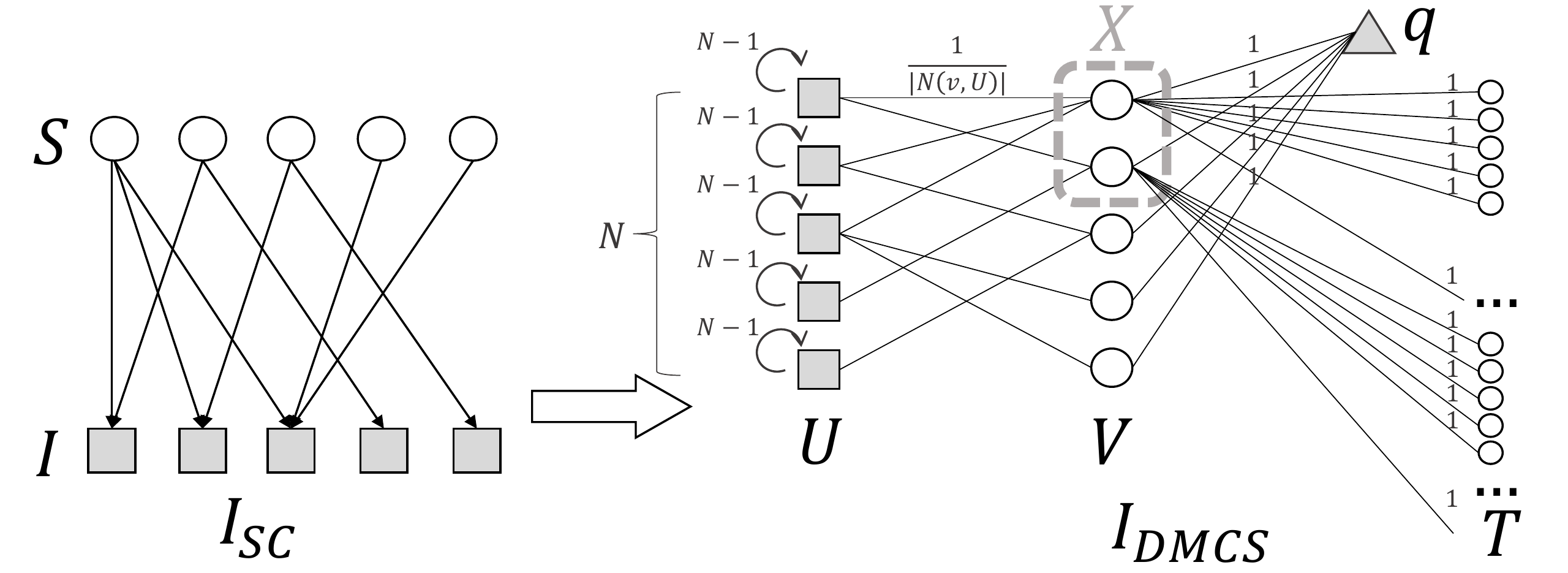}
\vspace{-0.1cm}
\caption{A reduction from Set-cover problem to {\DMCS}}
\label{DMCS_fig:NP}
\end{figure}

We are ready to start the reduction process. Note that our DMCS problem aims for finding a community which satisfies the connectivity constraint while containing all the query nodes.  
Suppose that our community $C = Q$. Currently, $C$ is not connected. Thus, we need to choose the nodes in the $V$ side to guarantee the connectivity. 
It means that our resultant community $C$ has to include $Q$ and some nodes in $V$. We denote a set of nodes $X \subseteq V$ if it belongs to the community $C$, and denote $N$ the number of nodes in $U$. 

There are two possible scenarios. 
The first scenario is that our resultant community $C$ does not include any nodes in $T$, i.e., our resultant community $C$ includes the nodes $U$, $q$, and $X$. 
The other scenario is that our resultant community $C$ consists of $U$, $q$, $X$, and some nodes in $T$ which are connected to $X$. 

\spara{Scenario 1.} 
The values of the variables of the density modularity for Scenario 1 are as follows. 
\begin{itemize}[leftmargin=*]
    \item $|C|= 1+N+|X|$
    \phantom{1123123.} $\bullet$ $w_C = 2|X| + N(N-1)$
    \item $w_G = 2N + N^2 + N(N-1)$
    \phantom{12} $\bullet$ \mbox{$d_C = N + 2|X| + N|X| + N + N(N-1)$}
\end{itemize} 

We check the derivative of $DM_1(G, |X|)$ to check whether $DM_1$ is monotonic decreasing or not. 

\begin{align}
    \frac{\partial DM_1(G, |X|)}{\partial |X|}=& -\frac{(N^2+4N+4)|X|^2 + (2N^3+10N^2+16N+8)|X|}{4N(2N+1)(|X|+N+1)^2} \notag  \\ 
    & - \frac{9N^4 -14N^3 - 19N^2 -4N}{4N(2N+1)(|X|+N+1)^2}
\end{align}
\spara{Scenario 2.} 
We check the case when the resultant community $C$ contains the nodes in $T$. The values of the variables of the density modularity for Scenario 2 are as follows.
\begin{itemize}[leftmargin=*]
    \item $|C|=|q\cup U \cup X| = 1+N+|X|+N|X|$
    \item $w_C = 2|X| + N(N-1) + N|X|$
    \item $w_G = 2N + N^2 + N(N-1)$
    \item $d_C = N + 2|X| + N|X| + N + N(N-1) + N|X|$
\end{itemize} 

Similarly, we check the derivative. 
\begin{align}
\frac{\partial DM_2(G, |X|)}{\partial |X|} =& -\frac{(4N^2+8N+4)|X|^2 + (8N^2 + 16N + 8)|X|}{4N(N+1)(2N+1)(|X|+1)^2} \notag \\
&-\frac{(7N^4 - 10N^3 - 17N^2 - 4N)}{4N(N+1)(2N+1)(|X|+1)^2}
\end{align}

We observe that $\frac{\partial DM_1(G, |X|)}{\partial |X|} < 0$ and $\frac{\partial DM_2(G, |X|)}{\partial |X|} < 0$ since the dominant term is $N$ and $N$ can be easily expanded by adding dummy nodes. Thus, the objective functions $DM_1$ and $DM_2$ always decrease. 
Note that the size of $X$ must be larger than or equal to $1$ since all the query nodes must be connected. 
If we find an optimal solution of DMCS, it means that we can find a minimal set $X$ which can maximize the density modularity. 
We notice that $X$ can be an optimal solution of the set-cover problem since it is connected to all the nodes in the $U$ side, which are the items in $I_{SC}$, and its size is minimized.
Therefore,
finding a solution of an instance $I_{DMCS}$ is the same as finding a solution of $I_{SC}$. Therefore, we have the proof.

\bibliographystyle{ACM-Reference-Format}
\bibliography{acmart}


\begin{thebibliography}{65}


\ifx \showCODEN    \undefined \def \showCODEN     #1{\unskip}     \fi
\ifx \showDOI      \undefined \def \showDOI       #1{#1}\fi
\ifx \showISBNx    \undefined \def \showISBNx     #1{\unskip}     \fi
\ifx \showISBNxiii \undefined \def \showISBNxiii  #1{\unskip}     \fi
\ifx \showISSN     \undefined \def \showISSN      #1{\unskip}     \fi
\ifx \showLCCN     \undefined \def \showLCCN      #1{\unskip}     \fi
\ifx \shownote     \undefined \def \shownote      #1{#1}          \fi
\ifx \showarticletitle \undefined \def \showarticletitle #1{#1}   \fi
\ifx \showURL      \undefined \def \showURL       {\relax}        \fi
\providecommand\bibfield[2]{#2}
\providecommand\bibinfo[2]{#2}
\providecommand\natexlab[1]{#1}
\providecommand\showeprint[2][]{arXiv:#2}

\bibitem[\protect\citeauthoryear{Adamic and Glance}{Adamic and Glance}{2005}]%
        {adamic2005political}
\bibfield{author}{\bibinfo{person}{Lada~A Adamic} {and}
  \bibinfo{person}{Natalie Glance}.} \bibinfo{year}{2005}\natexlab{}.
\newblock \showarticletitle{The political blogosphere and the 2004 US election:
  divided they blog}. In \bibinfo{booktitle}{\emph{Proceedings of the 3rd
  international workshop on Link discovery}}. \bibinfo{publisher}{ACM},
  \bibinfo{address}{New York, NY, USA}, \bibinfo{pages}{36--43}.
\newblock


\bibitem[\protect\citeauthoryear{Akbas and Zhao}{Akbas and Zhao}{2017}]%
        {akbas2017truss}
\bibfield{author}{\bibinfo{person}{Esra Akbas} {and} \bibinfo{person}{Peixiang
  Zhao}.} \bibinfo{year}{2017}\natexlab{}.
\newblock \showarticletitle{Truss-based community search: a truss-equivalence
  based indexing approach}.
\newblock \bibinfo{journal}{\emph{PVLDB}} \bibinfo{volume}{10},
  \bibinfo{number}{11} (\bibinfo{year}{2017}), \bibinfo{pages}{1298--1309}.
\newblock


\bibitem[\protect\citeauthoryear{Barab{\'a}si}{Barab{\'a}si}{2009}]%
        {barabasi2009scale}
\bibfield{author}{\bibinfo{person}{Albert-L{\'a}szl{\'o} Barab{\'a}si}.}
  \bibinfo{year}{2009}\natexlab{}.
\newblock \showarticletitle{Scale-free networks: a decade and beyond}.
\newblock \bibinfo{journal}{\emph{science}} \bibinfo{volume}{325},
  \bibinfo{number}{5939} (\bibinfo{year}{2009}), \bibinfo{pages}{412--413}.
\newblock


\bibitem[\protect\citeauthoryear{Barab{\'a}si and Bonabeau}{Barab{\'a}si and
  Bonabeau}{2003}]%
        {barabasi2003scale}
\bibfield{author}{\bibinfo{person}{Albert-L{\'a}szl{\'o} Barab{\'a}si} {and}
  \bibinfo{person}{Eric Bonabeau}.} \bibinfo{year}{2003}\natexlab{}.
\newblock \showarticletitle{Scale-free networks}.
\newblock \bibinfo{journal}{\emph{Scientific american}} \bibinfo{volume}{288},
  \bibinfo{number}{5} (\bibinfo{year}{2003}), \bibinfo{pages}{60--69}.
\newblock


\bibitem[\protect\citeauthoryear{Barbieri, Bonchi, Galimberti, and
  Gullo}{Barbieri et~al\mbox{.}}{2015}]%
        {barbieri2015efficient}
\bibfield{author}{\bibinfo{person}{Nicola Barbieri}, \bibinfo{person}{Francesco
  Bonchi}, \bibinfo{person}{Edoardo Galimberti}, {and}
  \bibinfo{person}{Francesco Gullo}.} \bibinfo{year}{2015}\natexlab{}.
\newblock \showarticletitle{Efficient and effective community search}.
\newblock \bibinfo{journal}{\emph{Data mining and knowledge discovery}}
  \bibinfo{volume}{29}, \bibinfo{number}{5} (\bibinfo{year}{2015}),
  \bibinfo{pages}{1406--1433}.
\newblock


\bibitem[\protect\citeauthoryear{Bettinelli, Hansen, and Liberti}{Bettinelli
  et~al\mbox{.}}{2012}]%
        {bettinelli2012algorithm}
\bibfield{author}{\bibinfo{person}{Andrea Bettinelli}, \bibinfo{person}{Pierre
  Hansen}, {and} \bibinfo{person}{Leo Liberti}.}
  \bibinfo{year}{2012}\natexlab{}.
\newblock \showarticletitle{Algorithm for parametric community detection in
  networks}.
\newblock \bibinfo{journal}{\emph{PRE}} \bibinfo{volume}{86},
  \bibinfo{number}{1} (\bibinfo{year}{2012}), \bibinfo{pages}{016107}.
\newblock


\bibitem[\protect\citeauthoryear{Blondel, Guillaume, Lambiotte, and
  Lefebvre}{Blondel et~al\mbox{.}}{2008}]%
        {blondel2008fast}
\bibfield{author}{\bibinfo{person}{Vincent~D Blondel},
  \bibinfo{person}{Jean-Loup Guillaume}, \bibinfo{person}{Renaud Lambiotte},
  {and} \bibinfo{person}{Etienne Lefebvre}.} \bibinfo{year}{2008}\natexlab{}.
\newblock \showarticletitle{Fast unfolding of communities in large networks}.
\newblock \bibinfo{journal}{\emph{Journal of statistical mechanics: theory and
  experiment}} \bibinfo{volume}{2008}, \bibinfo{number}{10}
  (\bibinfo{year}{2008}), \bibinfo{pages}{P10008}.
\newblock


\bibitem[\protect\citeauthoryear{Brandes}{Brandes}{2001}]%
        {brandes2001faster}
\bibfield{author}{\bibinfo{person}{Ulrik Brandes}.}
  \bibinfo{year}{2001}\natexlab{}.
\newblock \showarticletitle{A faster algorithm for betweenness centrality}.
\newblock \bibinfo{journal}{\emph{Journal of mathematical sociology}}
  \bibinfo{volume}{25}, \bibinfo{number}{2} (\bibinfo{year}{2001}),
  \bibinfo{pages}{163--177}.
\newblock


\bibitem[\protect\citeauthoryear{Brandes, Delling, Gaertler, G{\"o}rke, Hoefer,
  Nikoloski, and Wagner}{Brandes et~al\mbox{.}}{2006}]%
        {brandes2006maximizing}
\bibfield{author}{\bibinfo{person}{Ulrik Brandes}, \bibinfo{person}{Daniel
  Delling}, \bibinfo{person}{Marco Gaertler}, \bibinfo{person}{Robert
  G{\"o}rke}, \bibinfo{person}{Martin Hoefer}, \bibinfo{person}{Zoran
  Nikoloski}, {and} \bibinfo{person}{Dorothea Wagner}.}
  \bibinfo{year}{2006}\natexlab{}.
\newblock \bibinfo{title}{Maximizing modularity is hard}.
\newblock
\newblock


\bibitem[\protect\citeauthoryear{Chang, Lin, Qin, Yu, and Zhang}{Chang
  et~al\mbox{.}}{2015}]%
        {chang2015index}
\bibfield{author}{\bibinfo{person}{Lijun Chang}, \bibinfo{person}{Xuemin Lin},
  \bibinfo{person}{Lu Qin}, \bibinfo{person}{Jeffrey~Xu Yu}, {and}
  \bibinfo{person}{Wenjie Zhang}.} \bibinfo{year}{2015}\natexlab{}.
\newblock \showarticletitle{Index-based optimal algorithms for computing
  steiner components with maximum connectivity}. In
  \bibinfo{booktitle}{\emph{SIGMOD}}. \bibinfo{publisher}{ACM},
  \bibinfo{address}{New York, NY, USA}, \bibinfo{pages}{459--474}.
\newblock


\bibitem[\protect\citeauthoryear{Charikar}{Charikar}{2000}]%
        {charikar2000greedy}
\bibfield{author}{\bibinfo{person}{Moses Charikar}.}
  \bibinfo{year}{2000}\natexlab{}.
\newblock \showarticletitle{Greedy approximation algorithms for finding dense
  components in a graph}. In \bibinfo{booktitle}{\emph{International Workshop
  on Approximation Algorithms for Combinatorial Optimization}}. Springer,
  \bibinfo{publisher}{Springer-Verlag}, \bibinfo{address}{Berlin, Heidelberg},
  \bibinfo{pages}{84--95}.
\newblock


\bibitem[\protect\citeauthoryear{Chen, Kuzmin, and Szymanski}{Chen
  et~al\mbox{.}}{2014}]%
        {chen2014community}
\bibfield{author}{\bibinfo{person}{Mingming Chen}, \bibinfo{person}{Konstantin
  Kuzmin}, {and} \bibinfo{person}{Boleslaw~K Szymanski}.}
  \bibinfo{year}{2014}\natexlab{}.
\newblock \showarticletitle{Community detection via maximization of modularity
  and its variants}.
\newblock \bibinfo{journal}{\emph{IEEE Transactions on Computational Social
  Systems}} \bibinfo{volume}{1}, \bibinfo{number}{1} (\bibinfo{year}{2014}),
  \bibinfo{pages}{46--65}.
\newblock


\bibitem[\protect\citeauthoryear{Chen, Nguyen, and Szymanski}{Chen
  et~al\mbox{.}}{2013}]%
        {chen2013measuring}
\bibfield{author}{\bibinfo{person}{Mingming Chen}, \bibinfo{person}{Tommy
  Nguyen}, {and} \bibinfo{person}{Boleslaw~K Szymanski}.}
  \bibinfo{year}{2013}\natexlab{}.
\newblock \showarticletitle{On measuring the quality of a network community
  structure}. In \bibinfo{booktitle}{\emph{2013 International Conference on
  Social Computing}}. IEEE, \bibinfo{pages}{122--127}.
\newblock


\bibitem[\protect\citeauthoryear{Chen, Nguyen, and Szymanski}{Chen
  et~al\mbox{.}}{2015}]%
        {chen2015new}
\bibfield{author}{\bibinfo{person}{Mingming Chen}, \bibinfo{person}{Tommy
  Nguyen}, {and} \bibinfo{person}{Boleslaw~K Szymanski}.}
  \bibinfo{year}{2015}\natexlab{}.
\newblock \bibinfo{title}{A new metric for quality of network community
  structure}.
\newblock
\newblock


\bibitem[\protect\citeauthoryear{Chicco and Jurman}{Chicco and Jurman}{2020}]%
        {chicco2020advantages}
\bibfield{author}{\bibinfo{person}{Davide Chicco} {and}
  \bibinfo{person}{Giuseppe Jurman}.} \bibinfo{year}{2020}\natexlab{}.
\newblock \showarticletitle{The advantages of the Matthews correlation
  coefficient (MCC) over F1 score and accuracy in binary classification
  evaluation}.
\newblock \bibinfo{journal}{\emph{BMC genomics}} \bibinfo{volume}{21},
  \bibinfo{number}{1} (\bibinfo{year}{2020}), \bibinfo{pages}{1--13}.
\newblock


\bibitem[\protect\citeauthoryear{Chu, Zhang, Lin, Zhang, Zhang, Xia, and
  Zhang}{Chu et~al\mbox{.}}{2020}]%
        {chu2020finding}
\bibfield{author}{\bibinfo{person}{Deming Chu}, \bibinfo{person}{Fan Zhang},
  \bibinfo{person}{Xuemin Lin}, \bibinfo{person}{Wenjie Zhang},
  \bibinfo{person}{Ying Zhang}, \bibinfo{person}{Yinglong Xia}, {and}
  \bibinfo{person}{Chenyi Zhang}.} \bibinfo{year}{2020}\natexlab{}.
\newblock \showarticletitle{Finding the best k in core decomposition: A time
  and space optimal solution}. In \bibinfo{booktitle}{\emph{ICDE}}. IEEE,
  \bibinfo{publisher}{IEEE Computer Society}, \bibinfo{address}{USA},
  \bibinfo{pages}{685--696}.
\newblock


\bibitem[\protect\citeauthoryear{Clauset, Newman, and Moore}{Clauset
  et~al\mbox{.}}{2004}]%
        {clauset2004finding}
\bibfield{author}{\bibinfo{person}{Aaron Clauset}, \bibinfo{person}{Mark~EJ
  Newman}, {and} \bibinfo{person}{Cristopher Moore}.}
  \bibinfo{year}{2004}\natexlab{}.
\newblock \showarticletitle{Finding community structure in very large
  networks}.
\newblock \bibinfo{journal}{\emph{PRE}} \bibinfo{volume}{70},
  \bibinfo{number}{6} (\bibinfo{year}{2004}), \bibinfo{pages}{066111}.
\newblock


\bibitem[\protect\citeauthoryear{Cui, Xiao, Wang, and Wang}{Cui
  et~al\mbox{.}}{2014}]%
        {cui2014local}
\bibfield{author}{\bibinfo{person}{Wanyun Cui}, \bibinfo{person}{Yanghua Xiao},
  \bibinfo{person}{Haixun Wang}, {and} \bibinfo{person}{Wei Wang}.}
  \bibinfo{year}{2014}\natexlab{}.
\newblock \showarticletitle{Local search of communities in large graphs}. In
  \bibinfo{booktitle}{\emph{SIGMOD}}. \bibinfo{publisher}{ACM},
  \bibinfo{address}{New York, NY, USA}, \bibinfo{pages}{991--1002}.
\newblock


\bibitem[\protect\citeauthoryear{Danon, Diaz-Guilera, Duch, and Arenas}{Danon
  et~al\mbox{.}}{2005}]%
        {danon2005comparing}
\bibfield{author}{\bibinfo{person}{Leon Danon}, \bibinfo{person}{Albert
  Diaz-Guilera}, \bibinfo{person}{Jordi Duch}, {and} \bibinfo{person}{Alex
  Arenas}.} \bibinfo{year}{2005}\natexlab{}.
\newblock \showarticletitle{Comparing community structure identification}.
\newblock \bibinfo{journal}{\emph{Journal of statistical mechanics: Theory and
  experiment}} \bibinfo{volume}{2005}, \bibinfo{number}{09}
  (\bibinfo{year}{2005}), \bibinfo{pages}{P09008}.
\newblock


\bibitem[\protect\citeauthoryear{DasGupta and Desai}{DasGupta and
  Desai}{2013}]%
        {dasgupta2013complexity}
\bibfield{author}{\bibinfo{person}{Bhaskar DasGupta} {and}
  \bibinfo{person}{Devendra Desai}.} \bibinfo{year}{2013}\natexlab{}.
\newblock \showarticletitle{On the complexity of Newmanʼs community finding
  approach for biological and social networks}.
\newblock \bibinfo{journal}{\emph{J. Comput. System Sci.}}
  \bibinfo{volume}{79}, \bibinfo{number}{1} (\bibinfo{year}{2013}),
  \bibinfo{pages}{50--67}.
\newblock


\bibitem[\protect\citeauthoryear{Dinh, Li, and Thai}{Dinh
  et~al\mbox{.}}{2015}]%
        {dinh2015network}
\bibfield{author}{\bibinfo{person}{Thang~N Dinh}, \bibinfo{person}{Xiang Li},
  {and} \bibinfo{person}{My~T Thai}.} \bibinfo{year}{2015}\natexlab{}.
\newblock \showarticletitle{Network clustering via maximizing modularity:
  Approximation algorithms and theoretical limits}. In
  \bibinfo{booktitle}{\emph{2015 IEEE International Conference on Data
  Mining}}. IEEE, \bibinfo{pages}{101--110}.
\newblock


\bibitem[\protect\citeauthoryear{Fang, Cheng, Li, Luo, and Hu}{Fang
  et~al\mbox{.}}{2017}]%
        {fang2017effective}
\bibfield{author}{\bibinfo{person}{Yixiang Fang}, \bibinfo{person}{Reynold
  Cheng}, \bibinfo{person}{Xiaodong Li}, \bibinfo{person}{Siqiang Luo}, {and}
  \bibinfo{person}{Jiafeng Hu}.} \bibinfo{year}{2017}\natexlab{}.
\newblock \showarticletitle{Effective community search over large spatial
  graphs}.
\newblock \bibinfo{journal}{\emph{PVLDB}} \bibinfo{volume}{10},
  \bibinfo{number}{6} (\bibinfo{year}{2017}), \bibinfo{pages}{709--720}.
\newblock


\bibitem[\protect\citeauthoryear{Fang, Cheng, Luo, and Hu}{Fang
  et~al\mbox{.}}{2016}]%
        {fang2016effective}
\bibfield{author}{\bibinfo{person}{Yixiang Fang}, \bibinfo{person}{Reynold
  Cheng}, \bibinfo{person}{Siqiang Luo}, {and} \bibinfo{person}{Jiafeng Hu}.}
  \bibinfo{year}{2016}\natexlab{}.
\newblock \showarticletitle{Effective community search for large attributed
  graphs}.
\newblock \bibinfo{journal}{\emph{PVLDB}} \bibinfo{volume}{9},
  \bibinfo{number}{12} (\bibinfo{year}{2016}), \bibinfo{pages}{1233--1244}.
\newblock


\bibitem[\protect\citeauthoryear{Fang, Huang, Qin, Zhang, Zhang, Cheng, and
  Lin}{Fang et~al\mbox{.}}{2020a}]%
        {fang2020survey}
\bibfield{author}{\bibinfo{person}{Yixiang Fang}, \bibinfo{person}{Xin Huang},
  \bibinfo{person}{Lu Qin}, \bibinfo{person}{Ying Zhang},
  \bibinfo{person}{Wenjie Zhang}, \bibinfo{person}{Reynold Cheng}, {and}
  \bibinfo{person}{Xuemin Lin}.} \bibinfo{year}{2020}\natexlab{a}.
\newblock \showarticletitle{A survey of community search over big graphs}.
\newblock \bibinfo{journal}{\emph{The VLDB Journal}} \bibinfo{volume}{29},
  \bibinfo{number}{1} (\bibinfo{year}{2020}), \bibinfo{pages}{353--392}.
\newblock


\bibitem[\protect\citeauthoryear{Fang, Yang, Zhang, Lin, and Cao}{Fang
  et~al\mbox{.}}{2020b}]%
        {fang2020effective}
\bibfield{author}{\bibinfo{person}{Yixiang Fang}, \bibinfo{person}{Yixing
  Yang}, \bibinfo{person}{Wenjie Zhang}, \bibinfo{person}{Xuemin Lin}, {and}
  \bibinfo{person}{Xin Cao}.} \bibinfo{year}{2020}\natexlab{b}.
\newblock \showarticletitle{Effective and efficient community search over large
  heterogeneous information networks}.
\newblock \bibinfo{journal}{\emph{PVLDB}} \bibinfo{volume}{13},
  \bibinfo{number}{6} (\bibinfo{year}{2020}), \bibinfo{pages}{854--867}.
\newblock


\bibitem[\protect\citeauthoryear{Fortunato and Barthelemy}{Fortunato and
  Barthelemy}{2007}]%
        {fortunato2007resolution}
\bibfield{author}{\bibinfo{person}{Santo Fortunato} {and} \bibinfo{person}{Marc
  Barthelemy}.} \bibinfo{year}{2007}\natexlab{}.
\newblock \showarticletitle{Resolution limit in community detection}.
\newblock \bibinfo{journal}{\emph{PNAS}} \bibinfo{volume}{104},
  \bibinfo{number}{1} (\bibinfo{year}{2007}), \bibinfo{pages}{36--41}.
\newblock


\bibitem[\protect\citeauthoryear{Gil-Mendieta and Schmidt}{Gil-Mendieta and
  Schmidt}{1996}]%
        {gil1996political}
\bibfield{author}{\bibinfo{person}{Jorge Gil-Mendieta} {and}
  \bibinfo{person}{Samuel Schmidt}.} \bibinfo{year}{1996}\natexlab{}.
\newblock \showarticletitle{The political network in Mexico}.
\newblock \bibinfo{journal}{\emph{Social Networks}} \bibinfo{volume}{18},
  \bibinfo{number}{4} (\bibinfo{year}{1996}), \bibinfo{pages}{355--381}.
\newblock


\bibitem[\protect\citeauthoryear{Girvan and Newman}{Girvan and Newman}{2002}]%
        {girvan2002community}
\bibfield{author}{\bibinfo{person}{Michelle Girvan} {and}
  \bibinfo{person}{Mark~EJ Newman}.} \bibinfo{year}{2002}\natexlab{}.
\newblock \showarticletitle{Community structure in social and biological
  networks}.
\newblock \bibinfo{journal}{\emph{PNAS}} \bibinfo{volume}{99},
  \bibinfo{number}{12} (\bibinfo{year}{2002}), \bibinfo{pages}{7821--7826}.
\newblock


\bibitem[\protect\citeauthoryear{Goldberg}{Goldberg}{1984}]%
        {goldberg1984finding}
\bibfield{author}{\bibinfo{person}{Andrew~V Goldberg}.}
  \bibinfo{year}{1984}\natexlab{}.
\newblock \bibinfo{booktitle}{\emph{Finding a maximum density subgraph}}.
\newblock \bibinfo{publisher}{University of California Berkeley},
  \bibinfo{address}{USA}.
\newblock


\bibitem[\protect\citeauthoryear{Guo, Singh, and Bassler}{Guo
  et~al\mbox{.}}{2020}]%
        {guo2020resolution}
\bibfield{author}{\bibinfo{person}{Jiahao Guo}, \bibinfo{person}{Pramesh
  Singh}, {and} \bibinfo{person}{Kevin~E Bassler}.}
  \bibinfo{year}{2020}\natexlab{}.
\newblock \bibinfo{title}{Resolution limit revisited: community detection using
  generalized modularity density}.
\newblock
\newblock


\bibitem[\protect\citeauthoryear{Hopcroft and Tarjan}{Hopcroft and
  Tarjan}{1973}]%
        {hopcroft1973algorithm}
\bibfield{author}{\bibinfo{person}{John Hopcroft} {and} \bibinfo{person}{Robert
  Tarjan}.} \bibinfo{year}{1973}\natexlab{}.
\newblock \showarticletitle{Algorithm 447: efficient algorithms for graph
  manipulation}.
\newblock \bibinfo{journal}{\emph{Commun. ACM}} \bibinfo{volume}{16},
  \bibinfo{number}{6} (\bibinfo{year}{1973}), \bibinfo{pages}{372--378}.
\newblock


\bibitem[\protect\citeauthoryear{Huang, Cheng, Qin, Tian, and Yu}{Huang
  et~al\mbox{.}}{2014}]%
        {huang2014querying}
\bibfield{author}{\bibinfo{person}{Xin Huang}, \bibinfo{person}{Hong Cheng},
  \bibinfo{person}{Lu Qin}, \bibinfo{person}{Wentao Tian}, {and}
  \bibinfo{person}{Jeffrey~Xu Yu}.} \bibinfo{year}{2014}\natexlab{}.
\newblock \showarticletitle{Querying k-truss community in large and dynamic
  graphs}. In \bibinfo{booktitle}{\emph{SIGMOD}}. \bibinfo{publisher}{ACM},
  \bibinfo{address}{New York, NY, USA}, \bibinfo{pages}{1311--1322}.
\newblock


\bibitem[\protect\citeauthoryear{Huang and Lakshmanan}{Huang and
  Lakshmanan}{2017}]%
        {huang2017attribute}
\bibfield{author}{\bibinfo{person}{Xin Huang} {and} \bibinfo{person}{Laks~VS
  Lakshmanan}.} \bibinfo{year}{2017}\natexlab{}.
\newblock \showarticletitle{Attribute-driven community search}.
\newblock \bibinfo{journal}{\emph{PVLDB}} \bibinfo{volume}{10},
  \bibinfo{number}{9} (\bibinfo{year}{2017}), \bibinfo{pages}{949--960}.
\newblock


\bibitem[\protect\citeauthoryear{Huang, Lakshmanan, Yu, and Cheng}{Huang
  et~al\mbox{.}}{2015}]%
        {huang2015approximate}
\bibfield{author}{\bibinfo{person}{Xin Huang}, \bibinfo{person}{Laks V.~S.
  Lakshmanan}, \bibinfo{person}{Jeffrey~Xu Yu}, {and} \bibinfo{person}{Hong
  Cheng}.} \bibinfo{year}{2015}\natexlab{}.
\newblock \showarticletitle{Approximate Closest Community Search in Networks}.
\newblock \bibinfo{journal}{\emph{Proc. VLDB Endow.}} \bibinfo{volume}{9},
  \bibinfo{number}{4} (\bibinfo{date}{Dec.} \bibinfo{year}{2015}),
  \bibinfo{pages}{276–287}.
\newblock
\showISSN{2150-8097}


\bibitem[\protect\citeauthoryear{Hubert and Arabie}{Hubert and Arabie}{1985}]%
        {hubert1985comparing}
\bibfield{author}{\bibinfo{person}{Lawrence Hubert} {and}
  \bibinfo{person}{Phipps Arabie}.} \bibinfo{year}{1985}\natexlab{}.
\newblock \showarticletitle{Comparing partitions}.
\newblock \bibinfo{journal}{\emph{Journal of classification}}
  \bibinfo{volume}{2}, \bibinfo{number}{1} (\bibinfo{year}{1985}),
  \bibinfo{pages}{193--218}.
\newblock


\bibitem[\protect\citeauthoryear{Hwang and Richards}{Hwang and
  Richards}{1992}]%
        {hwang1992steiner}
\bibfield{author}{\bibinfo{person}{Frank~K Hwang} {and} \bibinfo{person}{Dana~S
  Richards}.} \bibinfo{year}{1992}\natexlab{}.
\newblock \showarticletitle{Steiner tree problems}.
\newblock \bibinfo{journal}{\emph{Networks}} \bibinfo{volume}{22},
  \bibinfo{number}{1} (\bibinfo{year}{1992}), \bibinfo{pages}{55--89}.
\newblock


\bibitem[\protect\citeauthoryear{Jiang, Huang, and Cheng}{Jiang
  et~al\mbox{.}}{2021}]%
        {jiang2021efficient}
\bibfield{author}{\bibinfo{person}{Yuli Jiang}, \bibinfo{person}{Xin Huang},
  {and} \bibinfo{person}{Hong Cheng}.} \bibinfo{year}{2021}\natexlab{}.
\newblock \showarticletitle{I/O efficient k-truss community search in massive
  graphs}.
\newblock \bibinfo{journal}{\emph{The VLDB Journal}}  \bibinfo{volume}{30}
  (\bibinfo{year}{2021}), \bibinfo{pages}{1--26}.
\newblock


\bibitem[\protect\citeauthoryear{Khuller and Saha}{Khuller and Saha}{2009}]%
        {khuller2009finding}
\bibfield{author}{\bibinfo{person}{Samir Khuller} {and} \bibinfo{person}{Barna
  Saha}.} \bibinfo{year}{2009}\natexlab{}.
\newblock \showarticletitle{On finding dense subgraphs}. In
  \bibinfo{booktitle}{\emph{International colloquium on automata, languages,
  and programming}}. Springer, \bibinfo{publisher}{Springer-Verlag},
  \bibinfo{address}{Berlin, Heidelberg}, \bibinfo{pages}{597--608}.
\newblock


\bibitem[\protect\citeauthoryear{Kim, Choy, Kim, and Kang}{Kim
  et~al\mbox{.}}{2014}]%
        {kim2014link}
\bibfield{author}{\bibinfo{person}{Jungeun Kim}, \bibinfo{person}{Minsoo Choy},
  \bibinfo{person}{Daehoon Kim}, {and} \bibinfo{person}{U Kang}.}
  \bibinfo{year}{2014}\natexlab{}.
\newblock \showarticletitle{Link prediction based on generalized cluster
  information}. In \bibinfo{booktitle}{\emph{WWW}}. \bibinfo{publisher}{ACM},
  \bibinfo{address}{New York, NY, USA}, \bibinfo{pages}{317--318}.
\newblock


\bibitem[\protect\citeauthoryear{Kim, Guo, Feng, Cong, Khan, and Choudhury}{Kim
  et~al\mbox{.}}{2020}]%
        {kim2020densely}
\bibfield{author}{\bibinfo{person}{Junghoon Kim}, \bibinfo{person}{Tao Guo},
  \bibinfo{person}{Kaiyu Feng}, \bibinfo{person}{Gao Cong},
  \bibinfo{person}{Arijit Khan}, {and} \bibinfo{person}{Farhana~M Choudhury}.}
  \bibinfo{year}{2020}\natexlab{}.
\newblock \showarticletitle{Densely connected user community and location
  cluster search in location-based social networks}. In
  \bibinfo{booktitle}{\emph{SIGMOD}}. \bibinfo{publisher}{ACM},
  \bibinfo{address}{New York, NY, USA}, \bibinfo{pages}{2199--2209}.
\newblock


\bibitem[\protect\citeauthoryear{Lancichinetti and Fortunato}{Lancichinetti and
  Fortunato}{2009}]%
        {lancichinetti2009community}
\bibfield{author}{\bibinfo{person}{Andrea Lancichinetti} {and}
  \bibinfo{person}{Santo Fortunato}.} \bibinfo{year}{2009}\natexlab{}.
\newblock \showarticletitle{Community detection algorithms: a comparative
  analysis}.
\newblock \bibinfo{journal}{\emph{PRE}} \bibinfo{volume}{80},
  \bibinfo{number}{5} (\bibinfo{year}{2009}), \bibinfo{pages}{056117}.
\newblock


\bibitem[\protect\citeauthoryear{Lancichinetti, Fortunato, and
  Radicchi}{Lancichinetti et~al\mbox{.}}{2008}]%
        {lancichinetti2008benchmark}
\bibfield{author}{\bibinfo{person}{Andrea Lancichinetti},
  \bibinfo{person}{Santo Fortunato}, {and} \bibinfo{person}{Filippo Radicchi}.}
  \bibinfo{year}{2008}\natexlab{}.
\newblock \showarticletitle{Benchmark graphs for testing community detection
  algorithms}.
\newblock \bibinfo{journal}{\emph{PRE}} \bibinfo{volume}{78},
  \bibinfo{number}{4} (\bibinfo{year}{2008}), \bibinfo{pages}{046110}.
\newblock


\bibitem[\protect\citeauthoryear{Lim, Kim, and Lee}{Lim et~al\mbox{.}}{2016}]%
        {lim2016blackhole}
\bibfield{author}{\bibinfo{person}{Sungsu Lim}, \bibinfo{person}{Junghoon Kim},
  {and} \bibinfo{person}{Jae-Gil Lee}.} \bibinfo{year}{2016}\natexlab{}.
\newblock \showarticletitle{BlackHole: Robust community detection inspired by
  graph drawing}. In \bibinfo{booktitle}{\emph{ICDE}}. IEEE,
  \bibinfo{publisher}{IEEE Computer Society}, \bibinfo{address}{USA},
  \bibinfo{pages}{25--36}.
\newblock


\bibitem[\protect\citeauthoryear{Liu, Zhang, Zhang, Lin, and Zhang}{Liu
  et~al\mbox{.}}{2021}]%
        {liu2021efficient}
\bibfield{author}{\bibinfo{person}{Boge Liu}, \bibinfo{person}{Fan Zhang},
  \bibinfo{person}{Wenjie Zhang}, \bibinfo{person}{Xuemin Lin}, {and}
  \bibinfo{person}{Ying Zhang}.} \bibinfo{year}{2021}\natexlab{}.
\newblock \showarticletitle{Efficient community search with size constraint}.
  In \bibinfo{booktitle}{\emph{ICDE}}. IEEE, \bibinfo{publisher}{IEEE Computer
  Society}, \bibinfo{address}{USA}, \bibinfo{pages}{97--108}.
\newblock


\bibitem[\protect\citeauthoryear{Luo, Wang, and Promislow}{Luo
  et~al\mbox{.}}{2006}]%
        {icwi2006paper}
\bibfield{author}{\bibinfo{person}{Feng Luo}, \bibinfo{person}{James~Z. Wang},
  {and} \bibinfo{person}{Eric Promislow}.} \bibinfo{year}{2006}\natexlab{}.
\newblock \showarticletitle{Exploring Local Community Structures in Large
  Networks}. In \bibinfo{booktitle}{\emph{Proceedings of the 2006 IEEE/WIC/ACM
  International Conference on Web Intelligence}} \emph{(\bibinfo{series}{WI
  '06})}. \bibinfo{publisher}{IEEE Computer Society}, \bibinfo{address}{USA},
  \bibinfo{pages}{233–239}.
\newblock
\showISBNx{0769527477}
\urldef\tempurl%
\url{https://doi.org/10.1109/WI.2006.72}
\showDOI{\tempurl}


\bibitem[\protect\citeauthoryear{Luo, Wang, and Promislow}{Luo
  et~al\mbox{.}}{2008}]%
        {luo2008exploring}
\bibfield{author}{\bibinfo{person}{Feng Luo}, \bibinfo{person}{James~Z Wang},
  {and} \bibinfo{person}{Eric Promislow}.} \bibinfo{year}{2008}\natexlab{}.
\newblock \showarticletitle{Exploring local community structures in large
  networks}.
\newblock \bibinfo{journal}{\emph{Web Intelligence and Agent Systems: An
  International Journal}} \bibinfo{volume}{6}, \bibinfo{number}{4}
  (\bibinfo{year}{2008}), \bibinfo{pages}{387--400}.
\newblock


\bibitem[\protect\citeauthoryear{Lusseau, Schneider, Boisseau, Haase, Slooten,
  and Dawson}{Lusseau et~al\mbox{.}}{2003}]%
        {lusseau2003bottlenose}
\bibfield{author}{\bibinfo{person}{David Lusseau}, \bibinfo{person}{Karsten
  Schneider}, \bibinfo{person}{Oliver~J Boisseau}, \bibinfo{person}{Patti
  Haase}, \bibinfo{person}{Elisabeth Slooten}, {and} \bibinfo{person}{Steve~M
  Dawson}.} \bibinfo{year}{2003}\natexlab{}.
\newblock \showarticletitle{The bottlenose dolphin community of Doubtful Sound
  features a large proportion of long-lasting associations}.
\newblock \bibinfo{journal}{\emph{Behavioral Ecology and Sociobiology}}
  \bibinfo{volume}{54}, \bibinfo{number}{4} (\bibinfo{year}{2003}),
  \bibinfo{pages}{396--405}.
\newblock


\bibitem[\protect\citeauthoryear{Muff, Rao, and Caflisch}{Muff
  et~al\mbox{.}}{2005}]%
        {muff2005local}
\bibfield{author}{\bibinfo{person}{Stefanie Muff}, \bibinfo{person}{Francesco
  Rao}, {and} \bibinfo{person}{Amedeo Caflisch}.}
  \bibinfo{year}{2005}\natexlab{}.
\newblock \showarticletitle{Local modularity measure for network
  clusterizations}.
\newblock \bibinfo{journal}{\emph{PRE}} \bibinfo{volume}{72},
  \bibinfo{number}{5} (\bibinfo{year}{2005}), \bibinfo{pages}{056107}.
\newblock


\bibitem[\protect\citeauthoryear{Newman}{Newman}{2006}]%
        {newman2006modularity}
\bibfield{author}{\bibinfo{person}{Mark~EJ Newman}.}
  \bibinfo{year}{2006}\natexlab{}.
\newblock \showarticletitle{Modularity and community structure in networks}.
\newblock \bibinfo{journal}{\emph{PNAS}} \bibinfo{volume}{103},
  \bibinfo{number}{23} (\bibinfo{year}{2006}), \bibinfo{pages}{8577--8582}.
\newblock


\bibitem[\protect\citeauthoryear{Newman and Girvan}{Newman and Girvan}{2004}]%
        {newman2004finding}
\bibfield{author}{\bibinfo{person}{Mark~EJ Newman} {and}
  \bibinfo{person}{Michelle Girvan}.} \bibinfo{year}{2004}\natexlab{}.
\newblock \showarticletitle{Finding and evaluating community structure in
  networks}.
\newblock \bibinfo{journal}{\emph{PRE}} \bibinfo{volume}{69},
  \bibinfo{number}{2} (\bibinfo{year}{2004}), \bibinfo{pages}{026113}.
\newblock


\bibitem[\protect\citeauthoryear{Shin, Eliassi-Rad, and Faloutsos}{Shin
  et~al\mbox{.}}{2018}]%
        {shin2018patterns}
\bibfield{author}{\bibinfo{person}{Kijung Shin}, \bibinfo{person}{Tina
  Eliassi-Rad}, {and} \bibinfo{person}{Christos Faloutsos}.}
  \bibinfo{year}{2018}\natexlab{}.
\newblock \showarticletitle{Patterns and anomalies in k-cores of real-world
  graphs with applications}.
\newblock \bibinfo{journal}{\emph{Knowledge and Information Systems}}
  \bibinfo{volume}{54}, \bibinfo{number}{3} (\bibinfo{year}{2018}),
  \bibinfo{pages}{677--710}.
\newblock


\bibitem[\protect\citeauthoryear{Sozio and Gionis}{Sozio and Gionis}{2010}]%
        {sozio2010community}
\bibfield{author}{\bibinfo{person}{Mauro Sozio} {and}
  \bibinfo{person}{Aristides Gionis}.} \bibinfo{year}{2010}\natexlab{}.
\newblock \showarticletitle{The community-search problem and how to plan a
  successful cocktail party}. In \bibinfo{booktitle}{\emph{SIGKDD}}.
  \bibinfo{publisher}{ACM}, \bibinfo{address}{New York, NY, USA},
  \bibinfo{pages}{939--948}.
\newblock


\bibitem[\protect\citeauthoryear{van Rijsbergen}{van Rijsbergen}{1979}]%
        {van1979information}
\bibfield{author}{\bibinfo{person}{CJ van Rijsbergen}.}
  \bibinfo{year}{1979}\natexlab{}.
\newblock \bibinfo{title}{Information Retrieval, 2nd edButterworths}.
\newblock
\newblock


\bibitem[\protect\citeauthoryear{Wang, Wang, Cao, and Qin}{Wang
  et~al\mbox{.}}{2020}]%
        {wang2020efficient}
\bibfield{author}{\bibinfo{person}{Kai Wang}, \bibinfo{person}{Shuting Wang},
  \bibinfo{person}{Xin Cao}, {and} \bibinfo{person}{Lu Qin}.}
  \bibinfo{year}{2020}\natexlab{}.
\newblock \showarticletitle{Efficient radius-bounded community search in
  geo-social networks}.
\newblock \bibinfo{journal}{\emph{TKDE}} (\bibinfo{year}{2020}).
\newblock


\bibitem[\protect\citeauthoryear{Wang, Zhang, Lin, Zhang, Qin, and Zhang}{Wang
  et~al\mbox{.}}{2021}]%
        {wang2021efficient}
\bibfield{author}{\bibinfo{person}{Kai Wang}, \bibinfo{person}{Wenjie Zhang},
  \bibinfo{person}{Xuemin Lin}, \bibinfo{person}{Ying Zhang},
  \bibinfo{person}{Lu Qin}, {and} \bibinfo{person}{Yuting Zhang}.}
  \bibinfo{year}{2021}\natexlab{}.
\newblock \showarticletitle{Efficient and effective community search on
  large-scale bipartite graphs}. In \bibinfo{booktitle}{\emph{2021 IEEE 37th
  International Conference on Data Engineering (ICDE)}}. IEEE,
  \bibinfo{pages}{85--96}.
\newblock


\bibitem[\protect\citeauthoryear{Watts}{Watts}{2004}]%
        {watts2004six}
\bibfield{author}{\bibinfo{person}{Duncan~J Watts}.}
  \bibinfo{year}{2004}\natexlab{}.
\newblock \bibinfo{booktitle}{\emph{Six degrees: The science of a connected
  age}}.
\newblock \bibinfo{publisher}{WW Norton \& Company}.
\newblock


\bibitem[\protect\citeauthoryear{Watts and Strogatz}{Watts and
  Strogatz}{1998}]%
        {watts1998collective}
\bibfield{author}{\bibinfo{person}{Duncan~J Watts} {and}
  \bibinfo{person}{Steven~H Strogatz}.} \bibinfo{year}{1998}\natexlab{}.
\newblock \showarticletitle{Collective dynamics of ‘small-world’networks}.
\newblock \bibinfo{journal}{\emph{nature}} \bibinfo{volume}{393},
  \bibinfo{number}{6684} (\bibinfo{year}{1998}), \bibinfo{pages}{440--442}.
\newblock


\bibitem[\protect\citeauthoryear{Wu, Jin, Li, and Zhang}{Wu
  et~al\mbox{.}}{2015}]%
        {wu2015robust}
\bibfield{author}{\bibinfo{person}{Yubao Wu}, \bibinfo{person}{Ruoming Jin},
  \bibinfo{person}{Jing Li}, {and} \bibinfo{person}{Xiang Zhang}.}
  \bibinfo{year}{2015}\natexlab{}.
\newblock \showarticletitle{Robust local community detection: on free rider
  effect and its elimination}.
\newblock \bibinfo{journal}{\emph{PVLDB}} \bibinfo{volume}{8},
  \bibinfo{number}{7} (\bibinfo{year}{2015}), \bibinfo{pages}{798--809}.
\newblock


\bibitem[\protect\citeauthoryear{Yang and Leskovec}{Yang and Leskovec}{2015}]%
        {yang2015defining}
\bibfield{author}{\bibinfo{person}{Jaewon Yang} {and} \bibinfo{person}{Jure
  Leskovec}.} \bibinfo{year}{2015}\natexlab{}.
\newblock \showarticletitle{Defining and evaluating network communities based
  on ground-truth}.
\newblock \bibinfo{journal}{\emph{Knowledge and Information Systems}}
  \bibinfo{volume}{42}, \bibinfo{number}{1} (\bibinfo{year}{2015}),
  \bibinfo{pages}{181--213}.
\newblock


\bibitem[\protect\citeauthoryear{Yang, Fang, Lin, and Zhang}{Yang
  et~al\mbox{.}}{2020}]%
        {yang2020effective}
\bibfield{author}{\bibinfo{person}{Yixing Yang}, \bibinfo{person}{Yixiang
  Fang}, \bibinfo{person}{Xuemin Lin}, {and} \bibinfo{person}{Wenjie Zhang}.}
  \bibinfo{year}{2020}\natexlab{}.
\newblock \showarticletitle{Effective and efficient truss computation over
  large heterogeneous information networks}. In
  \bibinfo{booktitle}{\emph{ICDE}}. IEEE, \bibinfo{publisher}{IEEE Computer
  Society}, \bibinfo{address}{USA}, \bibinfo{pages}{901--912}.
\newblock


\bibitem[\protect\citeauthoryear{Yao and Chang}{Yao and Chang}{2021}]%
        {yao2021efficient}
\bibfield{author}{\bibinfo{person}{Kai Yao} {and} \bibinfo{person}{Lijun
  Chang}.} \bibinfo{year}{2021}\natexlab{}.
\newblock \showarticletitle{Efficient Size-Bounded Community Search over Large
  Networks.}
\newblock \bibinfo{journal}{\emph{Proc. VLDB Endow.}} \bibinfo{volume}{14},
  \bibinfo{number}{8} (\bibinfo{year}{2021}), \bibinfo{pages}{1441--1453}.
\newblock


\bibitem[\protect\citeauthoryear{Yuan, Qin, Zhang, Chang, and Yang}{Yuan
  et~al\mbox{.}}{2017}]%
        {yuan2017index}
\bibfield{author}{\bibinfo{person}{Long Yuan}, \bibinfo{person}{Lu Qin},
  \bibinfo{person}{Wenjie Zhang}, \bibinfo{person}{Lijun Chang}, {and}
  \bibinfo{person}{Jianye Yang}.} \bibinfo{year}{2017}\natexlab{}.
\newblock \showarticletitle{Index-based densest clique percolation community
  search in networks}.
\newblock \bibinfo{journal}{\emph{TKDE}} \bibinfo{volume}{30},
  \bibinfo{number}{5} (\bibinfo{year}{2017}), \bibinfo{pages}{922--935}.
\newblock


\bibitem[\protect\citeauthoryear{Zachary}{Zachary}{1977}]%
        {zachary1977information}
\bibfield{author}{\bibinfo{person}{Wayne~W Zachary}.}
  \bibinfo{year}{1977}\natexlab{}.
\newblock \showarticletitle{An information flow model for conflict and fission
  in small groups}.
\newblock \bibinfo{journal}{\emph{Journal of anthropological research}}
  \bibinfo{volume}{33}, \bibinfo{number}{4} (\bibinfo{year}{1977}),
  \bibinfo{pages}{452--473}.
\newblock


\bibitem[\protect\citeauthoryear{Zaki, Meira~Jr, and Meira}{Zaki
  et~al\mbox{.}}{2014}]%
        {zaki2014data}
\bibfield{author}{\bibinfo{person}{Mohammed~J Zaki}, \bibinfo{person}{Wagner
  Meira~Jr}, {and} \bibinfo{person}{Wagner Meira}.}
  \bibinfo{year}{2014}\natexlab{}.
\newblock \bibinfo{booktitle}{\emph{Data mining and analysis: fundamental
  concepts and algorithms}}.
\newblock \bibinfo{publisher}{Cambridge University Press},
  \bibinfo{address}{USA}.
\newblock


\bibitem[\protect\citeauthoryear{Zheng, Ye, Li, Ling, and Jin}{Zheng
  et~al\mbox{.}}{2017}]%
        {zheng2017finding}
\bibfield{author}{\bibinfo{person}{Zibin Zheng}, \bibinfo{person}{Fanghua Ye},
  \bibinfo{person}{Rong-Hua Li}, \bibinfo{person}{Guohui Ling}, {and}
  \bibinfo{person}{Tan Jin}.} \bibinfo{year}{2017}\natexlab{}.
\newblock \showarticletitle{Finding weighted k-truss communities in large
  networks}.
\newblock \bibinfo{journal}{\emph{Information Sciences}}  \bibinfo{volume}{417}
  (\bibinfo{year}{2017}), \bibinfo{pages}{344--360}.
\newblock


\end{thebibliography}

\end{document}